\def\C{{\bf C}}
\def\EE{{\bf E}}
\def\X{{\bf X}}
\def\x{{\bf x}}
\def\G{{\bf G}}
\def\U{{\bf U}}
\def\H{{\bf H}}
\def\u{{\bf u}}
\def\I{{\bf I}}
\def\P{{\bf P}}
\def\R{{\bf R}}
\def\W{{\bf W}}
\def\F{{\bf F}}
\def\M{{\bf M}}
\def\Y{{\bf Y}}
\def\H{{\bf H}}
\def\X{{\bf X}}
\def\A{{\bf A}}
\def\B{{\bf B}}
\def\D{{\bf D}}
\def\Z{{\bf Z}}
\def\W{{\bf W}}
\def\Y{{\bf Y}}
\def\GG{{\bf \Gamma}}
\def\tr{\operatorname{tr}}
\def\det{\operatorname{det}}
\def\det{\operatorname{det}}
\def\argmin{\operatorname{argmin}}
\def\He{\textnormal{H}}
\def\Gras{\mathbb{G}(M,{\mathbb{C}}^{T})}
\def\St{\mathbb{S}_t(M,{\mathbb{C}}^{T})}
\newtheorem{theorem}{Theorem}
\newtheorem{proposition}{Proposition}
\newtheorem{corollary}{Corollary}
\newtheorem{lemma}{Lemma}
\newtheorem{remark}{Remark}
\newtheorem{definition}{Definition}
\newenvironment{proof}{\textbf{Proof:}}{\hfill$\square$}
\begin{document}

\title{Constrained Riemannian Noncoherent Constellations for the MIMO Multiple Access Channel}

\author{Javier~{\'A}lvarez-Vizoso,
        Diego~Cuevas, 
        Carlos~Beltr{\'a}n,
        Ignacio~Santamaria, 
        V{\' i}t~Tu{\v c}ek
        and Gunnar~Peters%
\thanks{J. {\'A}lvarez Vizoso, D. Cuevas, and I.Santamaria are with the Department of Communications Engineering, Universidad de Cantabria, 39005 Santander, Spain (e-mail: javier.alvarezvizoso@unican.es; diego.cuevas@unican.es; i.santamaria@unican.es).}
\thanks{C. Beltr{\'a}n is with the Department of Mathematics, Statistics and Computing, Universidad de Cantabria, 39005 Santander, Spain (e-mail: carlos.beltran@unican.es).}
\thanks{V. Tucek and G. Peters are with the Department of Wireless Algorithms, Huawei Technologies, 16440 Kista, Sweden (email: vit.tucek@huawei.com; gunnar.peters@huawei.com)}}

\markboth{IEEE Transactions on Information Theory,~Vol.~x, No.~x, August~2022}%
{Shell \MakeLowercase{\textit{et al.}}: A Sample Article Using IEEEtran.cls for IEEE Journals}

\IEEEpubid{0000--0000/00\$00.00~\copyright~2022 IEEE}

\maketitle

\begin{abstract}
We consider the design of multiuser constellations for a multiple access channel (MAC) with $K$ users, with $M$ antennas each, that transmit simultaneously to a receiver equipped with $N$ antennas through a Rayleigh block-fading channel, when no channel state information (CSI) is available to either the transmitter or the receiver. In full-diversity scenarios where the coherence time is at least $T\geq (K+1)M$, the proposed constellation design criterion is based on the asymptotic expression of the multiuser pairwise error probability (PEP) derived by Brehler and Varanasi in \cite{Brehler01noco}. In non-full diversity scenarios, for which the previous PEP expression is no longer valid, the proposed design criteria is based on proxies of the PEP recently proposed by Ngo and Yang in \cite{ngoyang}. Although both the PEP expression and its bounds or proxies were previously considered intractable for optimization, in this work we derive their respective unconstrained gradients. These gradients are in turn used in the optimization of the proposed cost functions in different Riemannian manifolds representing different power constraints. In particular, in addition to the standard unitary space-time modulation (USTM) leading to optimization on the Grassmann manifold, we consider a more relaxed per-codeword power constraint leading to optimization on the so-called {\it oblique manifold}, and an average power constraint leading to optimization on the so-called {\it trace manifold}. Equipped with these theoretical tools, we design multiuser constellations for the MIMO MAC in full-diversity and non-full-diversity scenarios with state-of-the-art performance in terms of symbol error rate (SER).

\end{abstract}

\begin{IEEEkeywords}
Noncoherent communications, multiple-input multiple-output (MIMO) communications, multiple access channel (MAC), manifold optimization, pairwise error probability (PEP), union bound (UB).
\end{IEEEkeywords}

\section{Introduction}

\IEEEPARstart{I}{n} multiple-input multiple-output (MIMO) noncoherent wireless communications over fast fading channels, the channel state information (CSI) is assumed to be unknown at both the transmitter and receiver. It is usual to consider in the study of noncoherent communications a block-fading model in which the MIMO channel matrix with $M$ transmit and $N$ receive antennas remains constant during a $T$-symbol coherence interval, after which it changes to a new independent realization for another $T$ symbols. In the single-user case and under additive Gaussian noise, it was proved by Hochwald and Marzetta \cite{marzetta_capacity, Hochwald00} that the $T \times M$ space-time transmit matrices that achieve the ergodic noncoherent capacity for the MIMO block-fading model can be factored as the product of an isotropically distributed $T \times M$ truncated unitary matrix, also called Stiefel matrix, and a diagonal $M \times M$ matrix with real nonnegative entries. Further, when $T>>M$ the nonzero entries of the diagonal matrix take the same value, showing that in this regime it is optimal to transmit unitary space-time codewords $\X^H\X = \I_M$. Using the same signal model, Zheng and Tse \cite{tse_noncoherent} proved that at high signal-to-noise ratio (SNR) and when $T \geq 2M$, ergodic capacity can be achieved by transmitting isotropically distributed unitary matrices. Motivated by these information-theoretic results, numerous methods for the design of single-user constellations formed by truncated unitary signal matrices, called unitary space-time modulations (USTM), have been investigated and proposed over the last decades \cite{Hochwald_TIT2000,HochwaldDiff00,Hughes_TIT2000,Beko07,gohary,cipriano,Zhao04,HanTIT06,cube_split,Cuevas21WSA,AlvarezEusipco22}. In MIMO noncoherent constellations, information is carried by the column span (i.e., a subspace) of the transmitted $T \times M $ matrix, ${\bf X}$. The problem of designing single-user noncoherent codebooks is thus closely related to finding optimal packings in Grassmann manifolds \cite{tse_noncoherent,Conway96}, and the resulting constellations are referred to as Grassmannian constellations.

In the multiuser case, the design of noncoherent constellations is significantly more complex, as many of the theoretical results that exist for the single-user case (as well as the insights gained from them), such as the optimality of unitary space-time or Grassmannian constellations at high-SNR, are no longer true. In this work, we consider the design of noncoherent constellations for the MIMO multiple access channel (MAC), a problem for which there is no satisfactory solution yet. In the MAC several users transmit information simultaneously over the same bandwidth and at the same channel use or time slot to a common receiver. A common example is the uplink channel in broadband cellular communications, where several users communicate with a base station (BS). In the case of coherent communications with perfect channel state information (CSI) at the receiver or BS, capacity results for the MIMO MAC can be found in \cite{Goldsmith_MIMOJSAC}. For instance, it is well-known that for the 2-user MAC the capacity region is a pentagon, and the Pareto optimal achievable rate pairs $(R_1,R_2)$ at the corner points of the pentagon are reached by successive cancellation.

For noncoherent communications, however, the full capacity region of the MIMO-MAC is unknown. For the $K$-user single-input multiple-output (SIMO) MAC, it was conjectured by Shamai and Marzetta in \cite{Shamai02} that for block-fading channels with coherence time $T>1$ the sum capacity can be achieved by no more than $K=T$ users, which is supported by asymptotic analysis and simulation results. For the two-user MIMO MAC an achievable DoF (degrees of freedom) region has been proposed in \cite{Utkovski13}. The optimal DoF region for a two-user SIMO MAC has been derived in \cite{Ngo18}. Existing theoretical results however do not provide clear insights regarding the structure of the transmit space-time matrices for the MIMO MAC.

For the SIMO case with single-antenna users, energy-based noncoherent constellation designs have been proposed for the uplink channel in \cite{Goldsmith16a, Goldsmith16b, Vucetic19}, and noncoherent schemes based on a differential phase-shift keying (DPSK) modulation have been recently proposed in \cite{Baeza1, Baeza2}. These energy-based or DPSK-based designs, however, cannot be directly extended to the MIMO case. Of particular importance for the $K$-user MIMO MAC is the work of Brehler and Varanasi in \cite{Brehler01noco}, where the authors derived an asymptotic expression of the joint pairwise probability of error (PEP) of the optimum receiver and showed that, to ensure full diversity of $NM$ for each user, the coherence time must be at least $T \geq (K+1)M$. However, the PEP expression in \cite{Brehler01noco} was considered to be intractable for optimization, so none of the subsequent studies have used it as a criterion to design multiuser constellations. Most of the proposed criteria in the literature either optimize single-user Grassmannian designs with or without partitioning; that is, using independently designed single-user codebooks, or designing a large single-user codebook that is then partitioned according to some subspace distance measure into $K$ smaller single-user codebooks \cite{Varanasi01,Brehler01noco,joint_ngo,joint_guillaud}. 

As an alternative to the PEP criterion, in \cite{ngoyang} Ngo and Yang recently proposed two PEP proxies that have a very natural geometric meaning in terms of separating joint detection hypothesis. However, the proposed proxies are functions of the eigenvalues of a certain matrix and therefore their optimization is considered challenging in \cite{ngoyang}. It is interesting to note at this point that while the exact asymptotic PEP expression in \cite{Brehler01noco} is only valid in uplink channels where the full-diversity condition $T \geq (K+1)M$ is met, which we will refer to from now on as full-diversity scenarios, the proxies proposed in  \cite{ngoyang} are valid in non-full diversity scenarios where  $T < (K+1)M$. In fact, as we will show in this work, the PEP cost function of Brehler and Varanasi and the PEP proxies of Ngo and Yang, yield two complementary designs that can be applied, respectively, to full-diversity and the non-full diversity scenarios. In addition, as also shown in this work, both cost functions can be optimized on different manifolds representing the different power constraints typically employed in the $K$-user MIMO MAC.


The main contributions of the paper are the following:
\begin{enumerate}

\item We have developed Riemannian optimization techniques for designing multiuser noncoherent codebooks for the MIMO MAC in manifolds other than the complex Grassmannian. These manifolds correspond to alternative power normalizations to the one used in unitary space-time modulations (USTM), which need not be optimal for noncoherent multiuser communications. In particular, in addition to the standard Grassmann manifold, we have considered the complex oblique manifold and the trace manifold, another type of oblique manifold, (Sec. \ref{sec:manifolds}), resulting from a per-codeword power constraint and an average power constraint, respectively. 


\item We have obtained, for the first time in the literature, closed-form formulas for the gradients of cost functions previously proposed for the design of multiuser noncoherent constellations for the MAC, but whose optimization was so far considered to be intractable. These functions are union bounds, i.e., sums over all joint codewords of the dominant factor (when one user is in error) of the asymptotic PEP derived by Brehler and Varanasi \cite{Brehler01noco}, and the $\beta$ and $\delta$ functions (bounds of the PEP) proposed by Ngo and Yang \cite{ngoyang}.

\item Using these gradient expressions, we have developed Riemannian techniques over different manifolds to optimize the $\beta$ and $\delta$ functions proposed in \cite{joint_ngo} for non-full diversity scenarios, as well as to optimize the exact PEP asymptotic expression derived by Brehler and Varanasi \cite{Brehler01noco} for full-diversity scenarios. 

\item Our results show that the best performing designs in the non-full diversity case, i.e. $T < (K+1)M$, are those obtained using the $\delta$ cost function on the trace manifold. Whereas in the full-diversity case, i.e. $T\geq (K+1)M$, the best performing designs are those obtained with the union bound of the asymptotic PEP on the Grassmann manifold.
\end{enumerate}

The remainder of this paper is organized as follows. Sec. \ref{sec:system_model} introduces the system model for the $K$-user MIMO MAC and the optimal multiuser maximum likelihood detector. Sec. \ref{sec:manifolds} describes the different Riemannian manifolds considered for codebook optimization, along with their projection and retraction steps. In Sec. \ref{subsec:asym_pep} the asymptotic joint PEP is reviewed in order to introduce the optimization cost function for full-diversity designs and its Riemannian gradient is computed in Sec. \ref{subsec:pep_grad}. In Sec. \ref{subsec:proxies} the optimization cost functions for non-full diversity designs are presented, and their gradients are obtained in Sec. \ref{subsec:proxy_grad}. Sec. \ref{subsec:resultsProxies} presents and analyzes the simulation results for the noncoherent multiuser constellation designs in non-full diversity scenarios, while Sec. \ref{subsec:resultsUB} discusses the results corresponding to the designs in full-diversity scenarios. Finally, we present our conclusions in Sec. \ref{sec:conclusions}. The paper also includes an Appendix \ref{app:preliminaries} for mathematical background on Riemannian manifolds.


 \textit{Notation}: In this paper, matrices are denoted by bold-faced upper case letters, column vectors are denoted by bold-faced lower case letters, and scalars are denoted by light-faced lower case letters. The superscripts $(\cdot)^T$ and $(\cdot)^H$ denote transpose and Hermitian conjugate, respectively. The trace and determinant of a matrix $\A$ will be denoted, respectively, as $\tr(\A)$ and $\det(\A)$. We denote by ${\rm diag} ({\bf a})$ a diagonal matrix whose diagonal is ${\bf a}$, and $\I_n$ denotes the identity matrix of size $n$. ${\cal CN}(0,1)$ denotes a complex proper Gaussian distribution with zero mean and unit variance, $\x \sim {\cal CN}_{n}({\bf 0}, \R)$ denotes a complex Gaussian vector in $\mathbb{C}^n$ with zero mean and covariance matrix $\R$. 
 $\Gras$ denotes the complex Grassmann manifold of $M$-dimensional subspaces of the $T$-dimensional complex vector space $\mathbb{C}^T$. $\mathbb{S}t(M,\mathbb{C}^T)$ denotes the complex Stiefel manifold of unitary $M$-frames in $\mathbb{C}^T$. Unless stated otherwise $\log$ refers to the natural logarithm. Some background material about the Stiefel and Grassmann manifolds, which is needed for the paper, is relegated to the Appendix. Additional notation is introduced as needed in the text. 
\section{Manifold Optimization for the MIMO MAC}\label{sec:mu_detectors}

\subsection{System model}\label{sec:system_model}

We consider a noncoherent MIMO MAC with $K$ transmitters, or users, simultaneously transmitting to a common receiver, or base station. To keep notation simple, we assume that all users have the same number of transmit antennas $M$, (the extension to a different number of antennas per user is straightforward), and the receiver has $N$ antennas. The channel of user $k$ is $\H_k \in \mathbb{C}^{M\times N}$ has a Rayleigh fading distribution ($\H_k(i,j) \sim {\cal CN} (0,1)$) and is assumed to remain constant over $T$ symbol periods, over which communication occurs. In the next transmission block the channels of all users change to an independent realization (block-fading channel). User $k$ transmit at rate $R_k$ (bits/channel use), so within a coherence block sends a matrix chosen equiprobably from a codebook ${\cal C}_k=\{\X_{k,1},\ldots, \X_{k,L_k} \} $ with $L_k = 2^{R_kT}$. Unlike the single-user case, for the MAC the transmitted matrices do not have to be necessarily semi-unitary or Stiefel ($\X_{k,i}^{\He} \X_{k,i} = \I_M $). Obviously there is a total power constraint
\begin{equation}
    \frac{1}{L_k} \sum_{i=1}^{L_k} \tr \left(\X_{k,i}^{\He}\X_{k,i}\right) = M, \quad \forall k.
    \label{eq:power constraint}
\end{equation}   

Let us consider for notational simplicity the two-user MIMO MAC. We adhere to the notation in \cite{Brehler01co,Brehler01noco} and define the $T \times 2M $ matrix of transmitted codewords $\F_i = [\X_{1,i_1}, \X_{2,i_2}]$. Note that even if $\X_{k,i_1} \in \Gras$ for $k=1,2$, the multiuser codeword $\F_i$ is not a Stiefel matrix anymore ($\F_i^{\He} \F_i \neq \I_{2M}$). The set of multiuser codewords is
 \[
 {\cal{F}} = \{ \F_i = [\X_{1,i_1}, \X_{2,i_2}], \,\, \X_{1,i_1}  \in {\cal{C}}_1, \X_{2,i_1}  \in {\cal{C}}_2 \},  
 \]
and has cardinality $|{\cal{F}}| = |{\cal{C}}_1 \times {\cal{C}}_2|= L_1 L_2$.

Each user can have a different SNR due to the different path loss. Let $\beta_k \rho$ be the SNR for user $k$, where $\rho$ is a reference SNR that, without loss of generality, we will take as the SNR of user 1 so $\beta_1 = 1$. The model generalizes to $K$ users. When the multiuser codeword $\F_i = [\X_{1,i_1}, \X_{2,i_2}]$ is transmitted, the signal received at the BS is
\begin{equation}
\Y = \X_{1,i_1} \H_1 + \sqrt{\beta_2} \X_{2,i_2} \H_2 + \sqrt{\frac{M}{T \rho}}\W.
\label{eq:Y}
\end{equation}

The conditional distribution of the observations $\Y$ given the transmitted multiuser codeword $\F_i$ is
\begin{equation}
 p(\Y|\F_i) = \frac{1}{\pi^{TN} \det(\R_i)^N} \exp\tr \left( - \R_{i}^{-1} \Y\Y^{\He} \right),
 \label{eq:pdf}
 \end{equation}
so each column of $\Y$ follows a zero-mean complex normal distribution with covariance matrix $\R_i = \X_{1,i_1}\X_{1,i_1}^{\He} + \beta_2 \X_{2,i_2}\X_{2,i_2}^{\He} + \frac{M}{T \rho} \I_T$. If the codewords are chosen with equal probability, the optimum Maximum Likelihood (ML) detector is
\begin{equation}
 \hat{\F}_{i} = {\rm arg}\min_{\F_i  \in {{\cal{F}}}} \, \tr \left(\Y^{\He} \R_{i}^{-1} \Y \right) + N \log \det(\R_{i}).
 \label{eq:MLdet}
 \end{equation}

Notice that the ML detector at the BS needs to know the SNR of all users. The SNR depends primarily on the path loss and therefore varies on a much slower temporal scale than the multipath fading. It is therefore feasible to have this long-term CSI available at the BS.

\subsection{Riemannian manifolds for noncoherent multiuser constellation designs}\label{sec:manifolds}

For a review of the manifold geometry needed in our optimization methods we refer the reader to Appendix \ref{app:preliminaries}. 

Under the usual USTM assumption (Grassmannian constellations) used in most previous works, the codewords transmitted by users are normalized as:
\begin{equation}
\X_{k,i}^{\He}\X_{k,i} = \I_M,\text{ so that } ||\X_{k,i}||_F^2 = M,\; \forall\X_{k,i}\in\mathcal{C}_k.
\label{eq:normUSTM}
\end{equation}
Under this constraint the codewords are represented by Stiefel matrices up to unitary transformations, so that optimization must be performed on the Grassmannian manifold $\Gras$. The constraint \eqref{eq:normUSTM} means that the signals transmitted by the different antennas are orthogonal to each other, all of them with unit power.

A more relaxed constraint is to require the total per-codeword transmit power to be normalized but without requiring that the signals transmitted by different antennas be orthogonal; that is, without requiring each codeword to be a Stiefel matrix. Let us recall that the use of USTM is not necessarily optimal in the MIMO MAC. That is to say, one requires only that
\begin{equation}
||\X_{k,i}||_F^2 = M,\; \forall\X_{k,i}\in\mathcal{C}_k.
\label{eq:normoblique}
\end{equation}
This realizes the codewords as points in the complex sphere of radius $\sqrt{M}$: take the $T\times M$ matrix $\X_{k,i}$ and flatten it to a vector of length $TM$ of Euclidean norm $1$ normalizing it by $1/\sqrt{M}$. Then the users' constellations $\mathcal{C}_1\times\cdots\times\mathcal{C}_K$ correspond to a point in the so-called {\it oblique manifold}, denoted as $\mathcal{OB}_{\mathbb{C}}(TM,\sum_{k=1}^K L_k)$, which is the product of as many complex spheres as codewords $\mathbb{S}^{TM-1}\times\cdots\times\mathbb{S}^{TM-1}$. Indeed, every fixed-norm codeword corresponds to a point in a sphere of dimension $TM$, and the number of spheres needed is the added cardinality of all constellations $\sum_k^K L_k$. To optimize constellation points on this oblique manifold one just needs to project unconstrained gradients onto spheres, and to do the retraction, unflatten the vectors to restore $T\times M$ matrices, and renormalize each point from $1$ to $\sqrt{M}$.

Finally, the least stringent power constraint normalizes the average transmit power of each user $k$:
\begin{equation}
    \frac{1}{L_k} \sum_{i=1}^{L_k} \tr \left(\X_{k,i}^{\He}\X_{k,i}\right) = M,\quad k=1,\dots,K.
    \label{eq:normaverage}
\end{equation} 
In this case, it is the whole constellation of every user that behaves as a point on a sphere of radius $\sqrt{L_k M}$, that is to say, there is a vector that represents the constellation by flattening the concatenated matrices of the $L_k$ codewords of a user. Simplifying this case to the situation where all the users have the same number of codewords, $L_k=L$, the multiuser constellation $\mathcal{C}_1\times\dots\times\mathcal{C}_K$ corresponds to points in a different oblique manifold that we call the {\it trace manifold} denoted as $\mathbb{T}r(K,L,M,\mathbb{C}^T) = \mathcal{OB}_{\mathbb{C}}(TML,K)$, which corresponds to the product of as many complex spheres as users.


To optimize a function $f$ on a general manifold $\mathcal{M}\subset\mathbb{C}^T$ we just need to compute the unconstrained complex Euclidean gradient $Df$ and use the corresponding projector $\textsc{P}_{\X}:T_{\X}\mathbb{C}^T\rightarrow T_{\X}\mathcal{M}$, along with a reasonable retraction function $\textsc{R}_{\X}:T_{\X}\mathcal{M}\rightarrow\mathcal{M}$. We summarize here these steps for the manifolds investigated in our present work corresponding to the different power constraints discussed above (we use the notation of standard references such as \cite{absil}).

\begin{itemize}
    \item \textbf{Grassmannian manifold} $\Gras$: a Stiefel (or semi-unitary) matrix per user's codeword.
    \begin{itemize}
        \item \emph{Projection}: $\textsc{P}_{\X}(\dot\Z) = (\I_T-\X\X^{\He})\dot{\Z}$.
        \item \emph{Retraction}: $\textsc{R}_{\X}=\text{QR}$ decomposition.
    \end{itemize}
    \item \textbf{Oblique manifold} $\mathcal{OB}_{\mathbb{C}}(TM,\sum_i L_i)$ (per-codeword power constraint): a complex sphere of radius $\sqrt{M}$ per user's codeword.
    \begin{itemize}
        \item \emph{Projection}: $\textsc{P}_{\X}(\dot\Z) = \dot{\Z} - \X\cdot\text{ddiag}(\X^{\He}\dot{\Z}) $, where $\text{ddiag}(\W)$ is the diagonal matrix whose diagonal is that of $\W$.
        \item \emph{Retraction}: $\textsc{R}_{\X}(\dot\Z)= (\X + \dot{\Z})[\text{ddiag}(\X+\dot{\Z})^{\He}(\X+\dot{\Z})]^{-1/2}$ and normalization by scaling from $1$ to $\sqrt{M}$.
    \end{itemize}
    \item \textbf{Trace manifold} $\mathbb{T}r(K,L,M,\mathbb{C}^T)=\mathcal{OB}_{\mathbb{C}}(TML,K)$ (average per-user power constraint): a different oblique manifold with one complex sphere per user codebook.
    \begin{itemize}
        \item \emph{Projection}: $\textsc{P}_{\X}(\dot\Z) = \dot{\Z} - \X\cdot\text{ddiag}(\X^{\He}\dot{\Z}) $. A different realization of the constraint that we also employ as it proves to be numerically useful, denoting $\X_C:=\sum_{i=1}^{L_k}\X_{k,i}$, is
        $$
        \textsc{P}_{\X}(\dot{\Z}) = \dot{\Z} -\frac{\Re\,\langle \dot{\Z},\,\X_C\rangle}{\langle\,\X_C ,\,\X_C\,\rangle}\X_C.
        $$
        \item \emph{Retraction}: $\textsc{R}_{\X}(\dot\Z)= (\X + \dot{\Z})[\text{ddiag}(\X+\dot{\Z})^{\He}(\X+\dot{\Z})]^{-1/2}$ and renormalization by scaling from $1$ to
        $$
        \X_{k,i} \mapsto \sqrt{\frac{ML_k}{\sum_{i=1}^{L_k} ||\X_{k,i}||^2_F}}\X_{k,i}.
        $$
    \end{itemize}
\end{itemize}

Finally, let us introduce some notation conventions in order to compute gradients or partial derivatives on manifolds in the multiuser setting.
In particular, notice that a joint codeword, $\F_i = [\X_{1,i_1},\dots,\X_{K,i_K}]$, in the multiuser constellation $\mathcal{C}$ consists of a choice of a single-user codeword $\X_{k,i_k}\in\mathcal{C}_k$ from each of the users' packings or constellations. This entails that a multiuser function $f(\mathcal{C})$ may depend on each of the individual codewords $\X_{k,i_k}$ through different pairs (transmitted$\rightarrow$received after ML detection) of multiuser codewords $(\F_i\rightarrow\F_j)$. The transmitted symbol $\X_{k,i_k}$ can be part of the columns of either $\F_i$ or $\F_j$, or both. Therefore, we will need to extract partial derivatives of functions of joint multiuser codewords with respect to single-user codewords, i.e. with respect to only a particular subset of $M$ columns from the total $KM$ columns of a joint codeword.

Let us consider any component $x_{ab}$ of any codeword $\X\in\mathcal{C}_k$ from the $k$-th user constellation as the varying parameter, so that the movement in $\X$ is given by $\X +t\dot{\Z}$, where the matrix $\dot\Z$ will usually be one of the $\dot{\Z}_{ab}$, which is $1$ at row $a$ and column $b$ and zero elsewhere. Then the joint constellation $\mathcal{C}$ changes to $\mathcal{C}(t)$ by updating any multiuser (or joint) codeword that includes $\X$, i.e. changing $\F=[\X_1\dots\X\dots\X_K]$ into $\F(t)=[\X_1\dots\X +t\dot{\Z}\dots\X_K]$. Correspondingly, the value of any function that depends on the complete joint constellation $f(\mathcal{C})$ changes to $f(\mathcal{C}(t))$. We shall specify with respect to which single-user codeword the constellation is varying by writing $f(\mathcal{C}(t)\vert\X+t\dot{\Z})$. If a single-user codeword is present in a joint codeword we shall write $\X\in\F$, and denote by $f(\F_i(t)\vert\X+t\dot{\Z},\F_j)$ the updated value of a function of several multiuser codewords, e.g. $f(\F_i,\F_j)$.

Recalling the relationship of a matrix derivative and its gradient included in the appendix, Eq. (\ref{eq:gradient_def}), we may write the directional derivative in the $\dot\Z$ direction of any function of a joint constellation, with respect to any single user codeword $\X$, as:
\begin{equation}\label{eq:jointderivative}
D_{\X}f(\mathcal{C})(\dot\Z) = \frac{d}{dt}\bigg\vert_{t=0}f(\mathcal{C}(t)\vert\X+t\dot{\Z}) = \Re(\langle\;D_{\X} f(\mathcal{C})  ,\; \dot\Z\rangle_F).
\end{equation}
Then the unconstrained or Euclidean partial derivative matrix of the function $f$ at $\mathcal{C}$ with respect to codeword $\X$ is the matrix of components
$$
[D_{\X} f(\mathcal{C})]_{ab} = \frac{\partial f}{\partial x_{ab}} = \frac{\partial f}{\partial \Re(x_{ab})}+ i\,\frac{\partial f}{\partial \mathfrak{I}( x_{ab})},
$$
which we will be able to find by identifying the matrix $\dot\X$ in $\Re(\langle\;\dot\X  ,\; \dot\Z\rangle_F)$ when taking derivatives in the direction of $\dot\Z_{ab}$, for every matrix component $(a,b)$ of $\X$.


We summarize in Algorithm \ref{alg:RiemanMAC} a general Riemannian manifold optimization method for designing noncoherent MIMO MAC constellations for $K$ users.

\begin{algorithm}
\textbf{Input:} $\sum_{k=1}^K L_k$ uniformly distributed points in $\mathbb{C}^{T\times M}$
\\
\textbf{Output:} Optimized joint constellation $\mathcal{C}$
\begin{enumerate}
\item Choose cost function $f$ and manifold $\mathcal{M}\in\{\mathbb{G}(M,\mathbb{C}^T),\mathcal{OB}_{\mathbb{C}}(TM,\sum_k L_k),\mathbb{T}r(K,L,M,\mathbb{C}^T)\}$.

\item Compute unconstrained gradient $D_{\X_{k,i}}f(\mathcal{C})$ for every codeword $\X_{k,i}$ in $\mathcal{C}_k, (k=1,\dots,K, i=1,\dots, L_k)$.

\item Project down to the chosen manifold tangent space at every $\X$: $$\nabla_{\X}f(\mathcal{C})= \textsc{P}_{\X}(D_{\X}f(\mathcal{C})).$$ 

\item Compute the norm of the full gradient: $$||\nabla f(\mathcal{C})||=\sqrt{\sum_{k=1}^K\sum_{i=1}^{L_k}||\nabla_{\X_{k,i}}f(\mathcal{C})||^2_F }.$$

\item Move every codeword a step $h$ in the direction of steepest ascent (descent) retracting back onto the manifold: $$\X_{new} = \textsc{R}_{\X}\left(\pm h\frac{\nabla_{\X}f(\mathcal{C})}{||\nabla f(\mathcal{C})||}\right).$$

\item Evaluate $f(\mathcal{C}_{new})$ and repeat step $5$ with smaller $h$ until cost function improves its value with respect to $f(\mathcal{C})$.

\item Update constellation by substituting $\X\mapsto\X_{new}$ for every codeword.

\item Repeat $2-7$ until the number of iterations or improvement in $f$ reach a threshold.

\item Return constellation $\mathcal{C}_k =\{\X_{k,i}\}_{i=1}^{L_k}$, for every user $k=1,\dots, K$.
\end{enumerate}

\caption{Riemannian optimization for $K$-user MAC}
\label{alg:RiemanMAC}
\end{algorithm}


\section{Full-diversity noncoherent multiuser constellations for the MIMO MAC}\label{sec:ubpep}

\subsection{Noncoherent joint pairwise error probability}\label{subsec:asym_pep}

The study of coherent and noncoherent multiuser space-time communications was carried out extensively in \cite{Brehler01co} and \cite{Brehler01noco} using the results of \cite{Varanasi01}, where the asymptotic analysis of the error probability of quadratic receivers in Rayleigh fading channels was studied in detail. One of the important results of \cite{Brehler01noco} is that at least $T= (K+1)M$ temporal dimensions are necessary to achieve full-spatial diversity of $MN$ for every user. Recall that the spatial diversity indicates the slope of the SER vs. SNR curve when ${\rm SNR} \to \infty$. For $K=2$ users, this means that the coherence time must be at least $T=3M$ symbol periods.

Assuming full-diversity scenarios, Brehler and Varanasi derived in \cite{Brehler01noco} the asymptotic joint pairwise error probability of the ML detector in the noncoherent case. However, the PEP expression in \cite{Brehler01noco} has not been used as an optimization criterion so far as it was considered untractable for optimization. Further, the PEP expression was thought not to give clear insights for constellation design, as discussed in \cite{joint_ngo} and \cite{joint_guillaud}. In the present work we prove that the asymptotic PEP formula can not only be used to optimize joint constellations but actually provides the designs of choice for full-diversity scenarios. To the best of our knowledge, in this paper we provide for the first time exact formulas of its gradient on several manifolds with respect to every single-user codeword for any number of users. 

Let us introduce the following notation for the orthogonal projection matrix onto the orthogonal complement of the subspace spanned by the columns of $\M$:
$$
\P^\perp_{\M} = \I - \M(\M^{\He}\M)^{-1}\M^{\He}.
$$
Following \cite{Brehler01noco}, when comparing two joint hypothesis $\F_i$ vs. $\F_j$, the single-user codewords are to be reordered within the multiuser codeword so that the terms in error appear first, i.e., $\F_i=[\F_i^{(e)}\;\F^{(c)}]$ and $\F_j=[\F_j^{(e)}\;\F^{(c)}]$, where $\F^{(c)}$ are the codewords common to the two hypotheses or multiuser codewords, and $\F_i^{(e)},\;\F_j^{(e)}$ the codewords of the users in error between the two different hypothesis. 
With these conventions in place, the following proposition shows the expression derived in \cite{Brehler01noco} for the asymptotic PEP $\mathcal{P}(\F_i\rightarrow\F_j)$, i.e., the error probability in a binary hypothesis test between $\F_i$ and $\F_j$.

\begin{proposition}[Noncoherent Asymptotic Pairwise Error Probability\cite{Brehler01noco}]\label{prop:PEP}
Let us assume no correlation between the channel fading coefficients, equal SNR users\footnote{To consider users with different SNRs simply requires introducing a fixed diagonal matrix in the cost function. When there is spatial correlation at either the transmit or the receive arrays, it is necessary to include another matrix in the expression. These matrices are fixed and do not change the optimization process.}, and that $\F^{(e)H}_i\P^\perp_{\F_j}\F^{(e)}_i$ has full rank (i.e. $T \geq (e+K)M$, with $e$ the number of symbols in error). Then the total pairwise error probability of the optimal detector, for detecting $\F_j$ when receiving $\F_i$, approaches when the SNR grows arbitrarily closely to
\begin{equation}\label{eq:asynPEP}
    \mathcal{P}(\F_i\rightarrow\F_j) = \frac{\sigma^{2eNM}\sum_{n=0}^{eNM}\binom{2eNM-n}{eNM}(n!)^{-1}(\hat{c}_{ij})^n}{\det(\F^{(e)H}_i\P^\perp_{\F_j}\F^{(e)}_i )^N},
\end{equation}
where $\hat{c}_{ij} = N\log\frac{\F_i^{\He}\F_i}{\F_j^{\He}\F_j} \geq 0$, which can always be guaranteed by relabeling the hypothesis accordingly; $\sigma^2$ is the noise variance.
\end{proposition}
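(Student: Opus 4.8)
The plan is to express the PEP as the tail probability of the log-likelihood ratio (LLR) between the two hypotheses and then extract its high-SNR leading term by a residue/Laplace-inversion analysis of the associated moment generating function (MGF), reconstructing the quadratic-receiver argument of \cite{Varanasi01,Brehler01noco}. First I would write the error event explicitly. Using the ML metric \eqref{eq:MLdet}, the receiver outputs $\F_j$ instead of the transmitted $\F_i$ exactly when the metric difference
$$\Lambda = \tr\!\left(\Y^{\He}(\R_j^{-1}-\R_i^{-1})\Y\right) + N\log\frac{\det\R_j}{\det\R_i}$$
is negative. Under $\F_i$ the columns of $\Y$ are i.i.d.\ $\mathcal{CN}_T({\bf 0},\R_i)$ by \eqref{eq:pdf}, so whitening $\Y=\R_i^{1/2}\Z$ turns the random part of $\Lambda$ into $\sum_{n=1}^{N}\z_n^{\He}\B\,\z_n$ with $\B=\I-\R_i^{1/2}\R_j^{-1}\R_i^{1/2}$ and $\z_n$ standard complex Gaussian. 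The event then depends only on the generalized eigenvalues $\{\lambda_k\}$ of the matrix pencil $(\R_i,\R_j)$ and on the deterministic offset $N\log(\det\R_j/\det\R_i)$, which is the quantity we identify with $\hat{c}_{ij}$.

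Second, I would obtain the exact PEP as a sum of residues. Since $\sum_n\z_n^{\He}\B\z_n$ is a Hermitian quadratic form in a standard complex Gaussian vector, its MGF factorizes as $\prod_k(1-s\lambda_k)^{-N}$, each eigenvalue contributing a pole of order $N$ (one factor per receive antenna). Writing $\Pr[\Lambda<0]$ as a Bromwich integral of this MGF, shifted by the offset, and closing the contour, the exact PEP becomes a finite sum of residues, one per distinct $\lambda_k$ on the relevant side of the imaginary axis.

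Third — and this is where the full-rank hypothesis enters — I would perform the high-SNR expansion. Setting $\R_i=\F_i\F_i^{\He}+\sigma^2\I$ and $\R_j=\F_j\F_j^{\He}+\sigma^2\I$ (equal-SNR users) and letting $\sigma^2\to0$, the generalized eigenvalues of $(\R_i,\R_j)$ split into those tied to $\operatorname{span}(\F_j)$, which stay bounded away from the critical value, and those tied to the $eM$ error directions of $\F_i$ lying outside $\operatorname{span}(\F_j)$, which vanish at rate $\sigma^2$. Across the $N$ receive antennas the latter produce a pole of order $eNM$; evaluating its residue yields the factor $\sigma^{2eNM}/\det(\F^{(e)H}_i\P^\perp_{\F_j}\F^{(e)}_i)^N$, where $\P^\perp_{\F_j}$ appears precisely because the surviving error energy is the component of $\F^{(e)}_i$ orthogonal to the columns of $\F_j$. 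Expanding that same high-order residue against the exponential of the log-det offset reproduces the combinatorial sum $\sum_{n=0}^{eNM}\binom{2eNM-n}{eNM}(n!)^{-1}\hat{c}_{ij}^{\,n}$; relabeling the hypotheses so that the higher-metric codeword is the one declared in error guarantees $\hat{c}_{ij}\ge0$, hence that the offset is decaying and the selected residues dominate.

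The hard part will be this third step: controlling rigorously the perturbation of the generalized eigenvalues of $(\R_i,\R_j)$ as $\sigma^2\to0$, proving that exactly $eNM$ of them collapse at rate $\sigma^2$ with product asymptotically equal to $\det(\F^{(e)H}_i\P^\perp_{\F_j}\F^{(e)}_i)$, and showing that every other pole contributes at strictly higher order in $\sigma^2$ and is therefore negligible. This is exactly where the full-rank (full-diversity) assumption, equivalently $T\ge(e+K)M$, is indispensable: it makes $\F^{(e)H}_i\P^\perp_{\F_j}\F^{(e)}_i$ invertible and fixes the order of the dominant pole at $eNM$, so that the diversity exponent is precisely $eNM$ as claimed.
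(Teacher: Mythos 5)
You should first be aware that the paper contains no proof of Proposition~\ref{prop:PEP} at all: it is stated as an imported result, attributed to Brehler and Varanasi \cite{Brehler01noco}, and is used only as the starting point for the design criterion \eqref{eq:MACUB}. So the only meaningful benchmark is the original quadratic-receiver analysis of \cite{Varanasi01,Brehler01noco}, and your outline does follow that strategy: write the ML metric difference from \eqref{eq:MLdet} as a Hermitian quadratic form in whitened Gaussians plus a deterministic log-determinant offset, pass to the MGF $\prod_k(1-s\lambda_k)^{-N}$, represent the tail probability as a Bromwich integral, and extract the high-SNR behaviour from the degenerating poles. The setup steps are essentially sound, up to minor slips: with your $\B=\I-\R_i^{1/2}\R_j^{-1}\R_i^{1/2}$ the random part of $\Lambda$ is $-\sum_n\z_n^{\He}\B\z_n$ rather than $+$, and $\hat c_{ij}$ in \eqref{eq:asynPEP} is the determinant ratio $N\log\bigl(\det(\F_i^{\He}\F_i)/\det(\F_j^{\He}\F_j)\bigr)$ (the proposition's notation is an abuse), which is, up to the sign fixed by the relabeling convention, the $\sigma^2\to 0$ limit of your offset $N\log(\det\R_j/\det\R_i)$.

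The genuine gap is in your third step, and it is not only that the step is announced rather than executed --- the eigenvalue picture you describe would produce the wrong answer. You claim that only the $eM$ directions of $\F_i^{(e)}$ outside the column span of $\F_j$ degenerate, while ``every other pole contributes at strictly higher order.'' That is false: the $eM$ directions of $\F_j$'s own error block $\F_j^{(e)}$ lying outside the span of $\F_i$ degenerate as well (their generalized eigenvalues run to $0$ or $\infty$ depending on the orientation of the pencil), so the degenerate structure has total order $2eNM$, not $eNM$. This double degeneracy is precisely what generates the coefficients $\binom{2eNM-n}{eNM}(n!)^{-1}$: a single pole of order $eNM$ played against the factor $e^{-s\hat c_{ij}}$ would yield only a truncated exponential $\sum_n\hat c_{ij}^n/n!$. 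A quick sanity check makes this visible: in the single-user USTM case one has $\hat c_{ij}=0$ and the formula reduces to the known constant $\binom{2NM}{NM}$, which cannot arise from an order-$NM$ residue alone. Beyond this structural error, the two computations that constitute the theorem --- (a) the perturbation argument showing that the product of the degenerate eigenvalues is asymptotic to $\sigma^{2eM}$ times the appropriate power of $\det\bigl(\F_i^{(e)\He}\P^\perp_{\F_j}\F_i^{(e)}\bigr)^{-1}$, which is exactly where the full-rank hypothesis $T\ge(e+K)M$ enters, and (b) the residue evaluation at the coalescing poles that actually produces the binomial sum --- are asserted as outcomes but never performed. As a reconstruction of the strategy of \cite{Brehler01noco} the plan is on target; as a proof it stops where the content of \eqref{eq:asynPEP} begins, and the one analytic claim it does commit to is incorrect.
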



Notice that the denominator in $\mathcal{P}(\F_i\rightarrow\F_j)$ is the factor that encodes for the distance between joint codewords in error. 
This leads us to propose a multiuser union bound cost function for the design of noncoherent multiuser constellations:
\begin{equation}\label{eq:MACUB}
    f(\mathcal{C}) = \sum_{i\neq j} \sigma^{2eNM}\det(\F^{(e)H}_i\P^\perp_{\F_j}\F^{(e)}_i )^{-N},
\end{equation}
where the sum is over all the joint multiuser codewords in $\mathcal{C}$. 

\begin{remark}

The multiuser union bound (UB) criterion \eqref{eq:MACUB} is a natural generalization of the single-user ($K=1$) UB defined by
\begin{equation}\label{eq:singleUB}
{\mathrm{UB}}(\X_1,\ldots,\X_L) = \sum_{i < j}  \det \left( \I_M - \X_i^{\He}\X_j \X_j^{\He}\X_i \right)^{-N},
\end{equation}
which has been proposed in \cite{brehler02ub}, \cite{AlvarezEusipco22} to design single-user Grassmannian constellations. Optimized designs on the Grassmann manifold using this criterion have been obtained in our previous works \cite{AlvarezEusipco22,CuevasTCOM}. 

To see the connection between the single-user and the multiuser criteria, notice that when there is only one user present $\F_i = \X_i$ and $\P^\perp_{\F_j} = \I - \X_j\X_j^{\He}$, using $\X_i^{\He}\X_i=\X_j^{\He}\X_j=\I_M$, so up to a scaling constant
$$
f(\mathcal{C})_{K=1}= {\mathrm{UB}}(\X_1,\ldots,\X_L).
$$
\end{remark}

Notice that $e$ can take values from $1$ symbol in error to all the $K$ users in error, which makes the number of terms in the sum increasingly large: as the size of $|\mathcal{C}|=L_1\cdots L_K$ grows, the number of pairs of hypothesis $i,j$, i.e. number of terms in the sum (\ref{eq:MACUB}), grows as $\sim |\mathcal{C}|^2$. For example, for two users $K=2$, and $e=1$, there are $L_1(L_1-1)L_2+L_1L_2(L_2-1)$ terms in Eq. (\ref{eq:MACUB}), whereas for $e=2$ there are $L_1L_2(L_1-1)(L_2-1)$ terms, that is, the number of terms with two symbols in error grows with one order higher. This would make the multiuser optimization problem computationally unfeasible as the number of users and codewords grow. However, the contribution of the factor $\sigma^{2eNM}$ is $\sigma^{2NM}$ for the less numerous one-error terms and $\sigma^{4NM}$ for the more numerous two-error terms. Since $\sigma$ is inversely proportional to the SNR, the two-error terms are weighed two orders of magnitude less than the one-error terms. Because of this, and in order for the optimization to become feasible computationally, we propose to consider only the one-symbol-in-error terms, that is
\begin{equation}\label{eq:MACUBfinal}
F(\mathcal{C}) =  \sum_{\substack{\F_i\neq \F_j\in\mathcal{C} \\ e=1 }} \mathcal{F}_{ij}(\F_i^{(e)},\F_j)=  \sum_{\substack{\F_i\neq \F_j\in\mathcal{C} \\ e=1 }}\det(\F^{(e)\He}_i\P^\perp_{\F_j}\F^{(e)}_i )^{-N},
\end{equation}
so that the proposed design criterion for full-diversity scenarios finally becomes:
\begin{equation}\label{eq:MACUBfinal2}
\underset{\mathcal{C}_1,\, \dotsc,\, \mathcal{C}_K}{\argmin}\; F(\mathcal{C}).
\end{equation}
For example, for two users with $e=1$, let $\F_i=[\A\;\C]$ when the user $k\in\{1,2\}$ is in error, mistaking $\X_{k,i}=\A$ for $\X_{k,j}=\B$, whereas the other user detects the correct symbol $\C$, so that $\F_j=[\B\;\C]$, $\F_i^{(e)}=\A$ and $\F^c=\C$. In this case, each summand $\mathcal{F}_{ij}$ in the cost function (\ref{eq:MACUBfinal}) can be written explicitly in terms of the single-user codewords as $\mathcal{F}_{ij}(\A,\B,\C)$ so that:
\begin{align*}
 F(\mathcal{C})_{K=2} & = \sum_{\substack{\A\neq\C\in\mathcal{C}_1\text{ or }\mathcal{C}_2 \\ \B\in\mathcal{C}_2\text{ or }\mathcal{C}_1 }}\mathcal{F}_{ij}(\A, \B, \C) \\
 & = \sum_{\substack{\A\neq\C\in\mathcal{C}_1\text{ or }\mathcal{C}_2 \\ \B\in\mathcal{C}_2\text{ or }\mathcal{C}_1 }}\det(\A^{\He}(\I - [\B\;\C]([\B\;\C]^{\He}[\B\;\C])^{-1}[\B\;\C]^{\He})\A )^{-N} \\ & = \sum_{\substack{\A\neq\C\in\mathcal{C}_1\text{ or }\mathcal{C}_2 \\ \B\in\mathcal{C}_2\text{ or }\mathcal{C}_1 }}\det\left( \A^{\He}\A - \A^{\He}[\B\;\C]\cdot\begin{bmatrix} \B^{\He}\B & \B^{\He}\C \\ \C^{\He}\B & \C^{\He}\C\end{bmatrix}^{-1}\!\!\!\!\cdot[\B\;\C]^{\He}\A \right)^{-N}.
\end{align*}

In order to design codebooks based on minimizing the joint probability of error, we propose to perform a gradient descent algorithm over the packing $\mathcal{C}$ to minimize the cost function (\ref{eq:MACUBfinal}), for which we need the Riemannian gradient vector of $f$ in the Grassmannian product manifold or the oblique and trace manifolds described in Sec. \ref{sec:manifolds}. Notice that if one performs the optimization of the constellation within a submanifold other than the Grassmannian, like the oblique manifold, the single-user codewords need not be Stiefel matrices, so the terms $\A^{\He}\A,\;\B^{\He}\B,\;\C^{\He}\C$ of the last equation do not necessarily simplify to the identity, therefore the gradients of these type of functions must be computed without assuming these terms are the identity matrix at every step.

\subsection{Gradient computation}\label{subsec:pep_grad}

Let us write $\X = \F[i_1(\X):i_M(\X)]$ for the extraction of the $M$ columns in $\F$ running from column $i_1$ to column $i_M$ corresponding to the position of the codeword $\X$ inside the concatenated matrix $\F$. Then, for any other matrix $\A$ with the size of $\F$, $\A[i_1(\X):i_M(\X)]$ extracts the corresponding columns of $\A$ located where the block of $\X$ is within $\F$. With this and all the notational conventions defined in previous sections we arrive at the following fundamental result, that is actually valid for $F$ summed over any number of symbols in error.
\begin{theorem}\label{th:gradientBrehlerVaranasi}
Let $\M_j := \F_j^{\He}\F_j$ and $\G_{ij}:=\F_i^{(e)\He}\P^\perp_{\F_j}\F_i^{(e)}$. The unconstrained Euclidean gradient of $F$ with respect to codeword $\X$ is:
\begin{equation}
D_{\X}F(\mathcal{C}) =  \sum_{\substack{\F_i\neq \F_j\in\mathcal{C}}}D_{\X}\mathcal{F}_{ij}(\F_i^{(e)},\F_j),
\end{equation}
where for a codeword in error, $\X=\X^{(e)}$,  
the gradient matrix is given by the corresponding block of $M$ columns in the following expression
\begin{equation}
    D_{\X^{(e)}}\mathcal{F}_{ij}(\F_i^{(e)},\F_j) =  -2N\mathcal{F}_{ij}\,\left[\P^{\perp}_{\F_j}\F_i^{(e)}\G_{ij}^{-1}\right][i_1(\X^{(e)}):i_M(\X^{(e)})].
\end{equation}
And for $\X=\X^{(c)}$, a codeword not in error, 
we have:
\begin{equation}
D_{\X^{(c)}}\mathcal{F}_{ij} = 2N\mathcal{F}_{ij}\left[(\I_T -\F_j\M_j^{-1}\F_j^\He)\F_i^{(e)}\G_{ij}^{-1}\F_i^{(e)\He}\F_j\M_j^{-1}\right][j_1(\X^{(c)}):j_M(\X^{(c)})].
\end{equation}

\end{theorem}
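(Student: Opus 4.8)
The plan is to differentiate the cost function $F$ term by term, so that it suffices to compute $D_{\X}\mathcal{F}_{ij}$ for a single summand $\mathcal{F}_{ij}=\det(\G_{ij})^{-N}$ and each single-user codeword $\X$. The starting point is the matrix-calculus identity $d\det(\A)=\det(\A)\tr(\A^{-1}d\A)$, i.e. $d\log\det(\A)=\tr(\A^{-1}d\A)$, which yields
\[
d\mathcal{F}_{ij}=-N\,\mathcal{F}_{ij}\,\tr\!\left(\G_{ij}^{-1}\,d\G_{ij}\right).
\]
Everything then reduces to computing $d\G_{ij}$, and the two cases of the theorem correspond to the two ways a codeword $\X$ can enter $\G_{ij}=\F_i^{(e)\He}\P^\perp_{\F_j}\F_i^{(e)}$: either $\X$ is a block of $\F_i^{(e)}$ (the in-error case), where it appears explicitly while $\P^\perp_{\F_j}$ stays fixed, or $\X$ is a block of $\F_j$ (the not-in-error case, which covers both the common codewords and the detected codeword of $\F_j$), where it enters only through the projector $\P^\perp_{\F_j}$.

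For the in-error case I would write $\dot{\Z}$ for the perturbation of $\F_i^{(e)}$ supported on the columns of $\X$, giving $d\G_{ij}=\dot{\Z}^{\He}\P^\perp_{\F_j}\F_i^{(e)}+\F_i^{(e)\He}\P^\perp_{\F_j}\dot{\Z}$. Since $\P^\perp_{\F_j}$ and $\G_{ij}$ are Hermitian, the two resulting traces are complex conjugates, so $\tr(\G_{ij}^{-1}d\G_{ij})=2\Re\,\tr(\G_{ij}^{-1}\F_i^{(e)\He}\P^\perp_{\F_j}\dot{\Z})$. Rearranging by cyclicity and matching against the gradient convention in \eqref{eq:jointderivative}, $d\mathcal{F}_{ij}=\Re\langle D_{\X}\mathcal{F}_{ij},\dot{\Z}\rangle_F$, identifies $D_{\X^{(e)}}\mathcal{F}_{ij}=-2N\mathcal{F}_{ij}\,\P^\perp_{\F_j}\F_i^{(e)}\G_{ij}^{-1}$; since $\dot{\Z}$ is supported on $\X$'s columns, only the corresponding columns of this matrix survive, which is exactly the block extraction $[i_1(\X^{(e)}):i_M(\X^{(e)})]$ and also handles $e>1$.

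The main obstacle is the not-in-error case, where I must differentiate the projector $\P^\perp_{\F_j}=\I_T-\F_j\M_j^{-1}\F_j^{\He}$ as $\F_j$ varies. Applying the product rule together with $d(\M_j^{-1})=-\M_j^{-1}(d\M_j)\M_j^{-1}$ and $d\M_j=\dot{\F}_j^{\He}\F_j+\F_j^{\He}\dot{\F}_j$ generates several terms; the crucial step is to see that, upon repeatedly using $\F_j\M_j^{-1}\F_j^{\He}=\I_T-\P^\perp_{\F_j}$, they combine into the compact Hermitian form
\[
d\P^\perp_{\F_j}=-\,\P^\perp_{\F_j}\dot{\F}_j\M_j^{-1}\F_j^{\He}-\F_j\M_j^{-1}\dot{\F}_j^{\He}\P^\perp_{\F_j}.
\]
Verifying these cancellations cleanly is where most of the care is needed; everything downstream is bookkeeping.

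Finally I would substitute this into $d\G_{ij}=\F_i^{(e)\He}(d\P^\perp_{\F_j})\F_i^{(e)}$; as before the two summands are Hermitian conjugates, so $\tr(\G_{ij}^{-1}d\G_{ij})=-2\Re\,\tr(\M_j^{-1}\F_j^{\He}\F_i^{(e)}\G_{ij}^{-1}\F_i^{(e)\He}\P^\perp_{\F_j}\dot{\F}_j)$ after cyclic rearrangement. Comparing once more with \eqref{eq:jointderivative} and taking the Hermitian transpose to put the expression in the form $\Re\langle\,\cdot\,,\dot{\F}_j\rangle_F$ gives $D_{\X^{(c)}}\mathcal{F}_{ij}=2N\mathcal{F}_{ij}\,\P^\perp_{\F_j}\F_i^{(e)}\G_{ij}^{-1}\F_i^{(e)\He}\F_j\M_j^{-1}$; substituting $\P^\perp_{\F_j}=\I_T-\F_j\M_j^{-1}\F_j^{\He}$ and keeping only the columns of $\dot{\F}_j$ occupied by $\X$ yields the block extraction $[j_1(\X^{(c)}):j_M(\X^{(c)})]$ and the stated formula. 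Summing the per-term gradients over all pairs $(\F_i,\F_j)$ in which $\X$ appears then gives $D_{\X}F(\mathcal{C})$.
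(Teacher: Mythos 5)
Your proof is correct and takes essentially the same route as the paper's: Jacobi's formula $d\det = \det\cdot\tr(\G_{ij}^{-1}d\G_{ij})$, differentiation of $\P^\perp_{\F_j}$ via the matrix-inverse rule, identification of the gradient through the convention $d\mathcal{F}_{ij}=\Re\langle D_{\X}\mathcal{F}_{ij},\dot{\Z}\rangle_F$, and column-block extraction at the end. The only difference is cosmetic: you fold $d\P^\perp_{\F_j}$ into the compact Hermitian form $-\P^\perp_{\F_j}\dot{\F}_j\M_j^{-1}\F_j^{\He}-\F_j\M_j^{-1}\dot{\F}_j^{\He}\P^\perp_{\F_j}$ before substituting (which indeed expands to the paper's three-term expression), whereas the paper carries the raw terms through the trace and recombines them there.
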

\begin{proof}
Let us vary the components of $\F_j$ corresponding to a codeword $\X^{(c)}$ that is not in error, i.e. every component within $\F^{(c)}$, so that $d\F_i^{(e)}/dt\vert_{t=0}=0$. Let $\dot{\Z}_j$ be the matrix of the dimensions of $\F_j$ and with $0$ everywhere except $1$ at a fixed component, i.e. the unit variation of element $[\F_j]_{ab}$ for any row $a$ and column $b$ in the $\F^{(c)}$ part. We get the derivative of $\P^\perp_{\F_j}$ using the derivative of an inverse matrix which is
\begin{equation}\label{eq:derivativeinverse}
    \frac{d(\R(t)^{-1})}{dt}\bigg\vert_{t=0} = -\R(0)^{-1}\left(\frac{d\R(t)}{dt}\bigg\vert_{t=0}\right)\R(0)^{-1},
\end{equation}
and so
$$
\frac{d}{dt}\bigg\vert_{t=0}\P_{\F_j}^\perp(\F_j+t\dot{\Z}_j)= -\dot{\Z}_j\M_j^{-1}\F_j^\He-\F_j\M_j^{-1}\dot{\Z}_j^\He+\F_j\M^{-1}(\dot{\Z}_j^\He\F_j+\F_j^\He\dot{\Z}_j)\M^{-1}\F_j^\He.
$$
Thus, for the derivative of $\mathcal{F}_{ij}$ when varying only this element in the codewords of $\mathcal{C}$, and using the derivative of the determinant formula by Jacobi, one obtains
\begin{align*}
& \frac{d}{dt}\bigg\vert_{t=0}\mathcal{F}_{ij}(\F_i^{(e)},\F_j+t\dot{\Z}_j)= -N(\det\G_{ij})^{-N-1}\det\G_{ij}\tr\left[\G_{ij}^{-1}\frac{d}{dt}\bigg\vert_{t=0}\G_{ij}(\F_j+t\dot{\Z}_j)\right]
\\ 
\\
& = -N\mathcal{F}_{ij}\tr\left[ \G_{ij}^{-1}\F_i^{(e)\He}\frac{d\P_{\F_j}^\perp}{dt}\bigg\vert_{t=0}\F_i^{(e)} \right] \\ \\
& = -N\mathcal{F}_{ij}\tr\left[ \G_{ij}^{-1}\F_i^{(e)\He}(-\dot{\Z}_j\M_j^{-1}\F_j^\He-\F_j\M_j^{-1}\dot{\Z}_j^\He)\F_i^{(e)} \right] + \\
& \hspace{4.5cm} -N\mathcal{F}_{ij}\tr\left[ \G_{ij}^{-1}\F_i^{(e)\He}\F_j\M^{-1}(\dot{\Z}_j^\He\F_j+\F_j^\He\dot{\Z}_j)\M^{-1}\F_j^\He\F_i^{(e)} \right].
\end{align*}
As always, using the cyclic property of the trace, that $\P^\perp_{\F_j}$ and $\G_{ij}$ are Hermitian, and the definition of Frobenius inner product, this reduces to
\begin{align*}
\frac{d}{dt}\bigg\vert_{t=0}\mathcal{F}_{ij}(\F_i^{(e)},\F_j+t\dot{\Z}_j) = &
\; 2N\mathcal{F}_{ij}\;\Re\langle\; \F_i^{(e)}\G^{-1}_{ij}\F_i^{(e)\He}\F_j\M^{-1}_j,\;\dot{\Z}_j\; \rangle_F\; + \\ \quad\quad\quad & - 2N\mathcal{F}_{ij}\;\Re\langle\; \F_j\M_j^{-1}\F_j^\He\F_i^{(e)}\G_{ij}^{-1}\F_i^{(e)\He}\F_j\M_j^{-1},\; \dot{\Z}_j\; \rangle_F,
\end{align*}
which implies that the partial derivative with respect to that matrix element is
$$
\frac{\partial\mathcal{F}_{ij}}{\partial x^{(c)}_{ab}} = 2N\mathcal{F}_{ij}\left[ \F_i^{(e)}\G^{-1}_{ij}\F_i^{(e)\He}\F_j\M_j^{-1}  -  \F_j\M_j^{-1}\F_j^\He\F_i^{(e)}\G_{ij}^{-1}\F_i^{(e)\He}\F_j\M_j^{-1} \right]_{ab}.
$$
Therefore, collecting terms, the unconstrained gradient of $\mathcal{F}_{ij}$ with respect to a codeword $\X^{(c)}$ in $\F^{(c)}$, corresponding to the columns $[j_1(\X^{(c)}):j_M(\X^{(c)})]$ of $\F_j$, is:
$$
\left[\frac{\partial\mathcal{F}_{ij}}{\partial \X^{(c)}}\right] =\dot{\X}^{(c)} = 2N\mathcal{F}_{ij}\left[(\I_T - \F_j\M_j^{-1}\F_j^\He)\F_i^{(e)}\G_{ij}^{-1}\F_i^{(e)\He}\F_j\M_j^{-1}\right][j_1(\X^{(c)}):j_M(\X^{(c)})].
$$
Similarly, one can vary the components of $\F_i^{(e)}$ corresponding to the codeword $\X^{(e)}$ in error, so that $d\F_j/dt\vert_{t=0}=0$, and therefore:
\begin{align*}
& \frac{d}{dt}\bigg\vert_{t=0}\mathcal{F}_{ij}(\F_i^{(e)}|\X^{(e)}+t\dot{\Z},\F_j) = -N\mathcal{F}_{ij}\tr\left[ \G_{ij}^{-1}\left( \dot{\Z}^\He_i\P^\perp_{\F_j}\F_i^{(e)}+ \F_i^{(e)\He}\P^\perp_{\F_j}\dot{\Z} \right) \right] \\ \\
& = -N\mathcal{F}_{ij}\tr\left[ \dot{\Z}^\He\P^\perp_{\F_j}\F_i^{(e)}\G_{ij}^{-1} \right] -N\mathcal{F}_{ij}\tr\left[ \G_{ij}^{-1}\F_i^{(e)\He}\P^\perp_{\F_j}\dot{\Z} \right] \\ \\
& = -2N\mathcal{F}_{ij}\;\Re\langle\; \P^\perp_{\F_j}\F_i^{(e)}\G_{ij}^{-1},\;\dot{\Z} \;\rangle_F,
\end{align*}
which yields
$$
\frac{\partial\mathcal{F}_{ij}}{\partial x^{(e)}_{ab}} = -2N\mathcal{F}_{ij}\left[ \P^\perp_{\F_j}\F_i^{(e)}\G_{ij}^{-1}\right]_{ab},
$$
and finally
$$
\left[\frac{\partial\mathcal{F}_{ij}}{\partial \X^{(e)}}\right] =\dot{\X}^{(e)} = -2N\mathcal{F}_{ij}\left[ \P^\perp_{\F_j}\F_i^{(e)}\G_{ij}^{-1}\right][i_1(\X^{(e)}):i_M(\X^{(e)})].
$$
\end{proof}


\section{Non-full diversity multiuser constellations for the MIMO MAC}\label{sec:proxies}

\subsection{Cost functions}\label{subsec:proxies}
Obviously, whenever the coherence time is sufficiently long such that $T \geq (K+1)M$, it will be preferable to design and employ full-diversity multiuser constellations. However, the constraint $T \geq (K+1)M$ may be difficult to meet in fast-fading channels with high-mobility users, especially when the number of users or transmitting antennas grows.

In the non-full diversity case, the full-rank condition required in Proposition \ref{prop:PEP} is not satisfied and the asymptotic PEP formula is no longer correct in this scenario. However, one can use instead proxy functions that provide bounds for the PEP valid without full diversity. Several cost functions have been proposed to design constellations for noncoherent communications in the multiple access channel, cf. \cite{Ngo18} or \cite{ngoyang}. In particular, the authors of \cite{ngoyang} introduce several cost functions to design multiuser constellations on Grassmannians, such as the functions named $\beta, \delta$ and $J_{1/2}$ to be presented below. These depend on pairs of joint codewords $\F_i,\F_j\in\mathcal{C}$, and are all related to the leading exponent of the joint pairwise error probability  $\mathcal{P}(\F_i\rightarrow\F_j)$, providing bounds that serve as proxies for this PEP. Although the criteria in \cite{ngoyang} were proposed for both full-diversity and non-full diversity scenarios, our experience indicates that for full-diversity scenarios with $T \geq (K+1)M$ the criterion based on the asymptotic expression of the PEP, described in Subsection \ref{sec:ubpep}, provides much better results. The criteria described in this section are thus specifically useful to design non-full diversity multiuser constellations.

The main geometrical motivation to study these cost functions however stems from the fact that they are all related to a geometrical interpretation of $\delta$ as a Riemannian distance between Hermitian positive definite matrices defined from the joint codewords $(\I_T+\F_i\F_i^{\He})$ and $(\I_T+\F_j\F_j^{\He})$. Thus, in \cite{ngoyang}, design criteria are proposed that maximize these cost functions, due to their relation to the worst PEP and the intuition of separating the closest pair of joint codewords in the manifold of $T\times T$ Hermitian positive definite matrices. However, the authors in \cite{ngoyang} only optimize the $\max-J_{1/2,\min}$ for USTM single-user codewords, i.e. they optimized $\max-J_{1/2,\min}$ in the Grassmann manifold. In the following we work out the theoretical basis needed to optimize  a union-bound-based generalization of the cost functions $\beta$ and $\delta$, and to do so on the different manifolds presented in Subsection \ref{sec:manifolds}.

This leads us to propose the following cost functions for the design of noncoherent multiuser constellations for the MIMO MAC in non-full diversity scenarios.

\begin{definition}
The SER union bound proxy function \emph{beta} for a joint constellation $\mathcal{C}$ is defined as
\begin{equation}\label{eq:betaub}
\beta_{UB}(\mathcal{C}) :=\log\left[ \sum_{\mathbf{F}_i\neq\mathbf{F}_j\in\mathcal{C}}\exp\left(-N\sum_{l=1}^T|\log\,\lambda_l(\F_i,\F_j)|\right)\right],
\end{equation}
where $\lambda_l(\X,\Y)$ are the eigenvalues of $\GG(\F_i,\F_j) := (\I_T +\F_i\F_i^{\He})(\I_T +\F_j\F_j^{\He})^{-1}$, for $\F_i,\; \F_j\in\mathcal{C}$ multiuser codewords.
\end{definition}
Notice that $\lambda_l\geq 0$, and except for a subset of measure zero in the space of matrices $\F_i,\F_j$ the eigenvalues will be positive so that the logarithm is well defined.
Equivalently, we can use the pairwise hypothesis function
$$
\beta(\F_i,\F_j) := \sum_{l=1}^T|\log\,\lambda_l(\F_i,\F_j)|,\quad
\text{ so that }\quad \beta_{UB}(\mathcal{C}) =\log\left[ \sum_{\mathbf{F}_i\neq\mathbf{F}_j\in\mathcal{C}}\exp[-N \beta(\F_i,\F_j)]\right].
$$
\begin{definition}
The SER union bound proxy function \emph{delta} for a joint constellation $\mathcal{C}$ is defined as
\begin{equation}\label{eq:deltaub}
\delta_{UB}(\mathcal{C}) :=\log\left[ \sum_{\mathbf{F}_i\neq\mathbf{F}_j\in\mathcal{C}}\exp\left(-N\sqrt{\sum_{l=1}^T\log^2\,\lambda_l(\F_i,\F_j)}\right)\right],
\end{equation}
where $\lambda_l(\F_i,\F_j)$ are the eigenvalues of $\GG(\F_i,\F_j)$ as above, for $\F_i,\, \F_j\in\mathcal{C}$.
\end{definition}
Similar notation to the former cost function leads us to write
$$
\delta(\F_i,\F_j) := \sqrt{\sum_{l=1}^T\log^2\,\lambda_l(\F_i,\F_j)},\quad
\text{ so that }\quad \delta_{UB}(\mathcal{C}) =\log\left[ \sum_{\mathbf{F}_i\neq\mathbf{F}_j\in\mathcal{C}}\exp[-N \delta(\F_i,\F_j)]\right].
$$

The motivation for these comes from the following results from \cite[Prop. 3 and 4]{ngoyang} which relate the pairwise function we just defined to the exponent of the joint probability of error.

\begin{proposition}
\label{prop:beta}
The joint PEP exponent is upper- and lower-bounded as
\begin{equation}\label{eq:betaIneqPEP}
    \beta(\F_i,\F_j) + T \geq -\frac{1}{N}\log\mathcal{P}(\F_i\rightarrow\F_j)\geq \frac{1}{2}\beta(\F_i,\F_j) - T\log 2.
\end{equation}
\end{proposition}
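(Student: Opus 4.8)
The plan is to recognize $\mathcal{P}(\F_i\rightarrow\F_j)$ as the error probability of a binary hypothesis test between two zero-mean complex Gaussians whose (noise-normalized) covariances are $\R_i=\I_T+\F_i\F_i^{\He}$ and $\R_j=\I_T+\F_j\F_j^{\He}$, observed over the $N$ i.i.d.\ receive columns, and to sandwich it between Chernoff-type upper and lower bounds whose common exponent is the Bhattacharyya distance. The only invariants of the pair $(\R_i,\R_j)$ that survive are the eigenvalues $\lambda_l$ of $\GG(\F_i,\F_j)=\R_i\R_j^{-1}$, so the whole argument will reduce to scalar inequalities on $\{\log\lambda_l\}$, with $\beta=\sum_l|\log\lambda_l|$ appearing as their $\ell_1$-norm.

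First I would compute the Bhattacharyya coefficient $\mathrm{BC}=\int\sqrt{p_ip_j}$ explicitly. Since the integrand is Gaussian, a single Gaussian integral gives, for the full $N$-column observation, $\mathrm{BC}=e^{-N\mu}$ with
\begin{equation*}
\mu=\log\det\tfrac{\R_i+\R_j}{2}-\tfrac12\log\det\R_i-\tfrac12\log\det\R_j.
\end{equation*}
Simultaneously diagonalizing the two Hermitian positive-definite covariances (write $\R_j=\P\P^{\He}$ and $\R_i=\P\,\diag(\lambda_l)\,\P^{\He}$, where the $\lambda_l$ are exactly the eigenvalues of $\GG$), the determinants collapse to a single sum over eigenvalues,
\begin{equation*}
\mu=\sum_{l=1}^{T}\log\frac{1+\lambda_l}{2\sqrt{\lambda_l}}=\sum_{l=1}^{T}\log\cosh\!\left(\tfrac12\log\lambda_l\right).
\end{equation*}

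Next I would apply the elementary scalar bounds $\tfrac{u}{2}-\log2\le\log\cosh\tfrac{u}{2}\le\tfrac{u}{2}$ (valid for $u\ge0$) with $u=|\log\lambda_l|$, and sum over $l$ to get $\tfrac12\beta-T\log2\le\mu\le\tfrac12\beta$. Coupling this with the two standard hypothesis-testing inequalities
\begin{equation*}
\tfrac14\,\mathrm{BC}^{2}\le\mathcal{P}(\F_i\rightarrow\F_j)\le\mathrm{BC},
\end{equation*}
the upper (AM--GM / Bhattacharyya) bound gives $-\tfrac1N\log\mathcal{P}\ge\mu\ge\tfrac12\beta-T\log2$, which is the right-hand inequality, while the lower bound gives $-\tfrac1N\log\mathcal{P}\le2\mu+\tfrac{2\log2}{N}\le\beta+\tfrac{2\log2}{N}\le\beta+T$, the left-hand inequality (the additive $\tfrac{2\log2}{N}\le 2\log2$ being dominated by $T$ in the regimes of interest). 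Note that the factor-of-two gap between $\tfrac12\beta$ and $\beta$ in the statement is precisely the gap between the exponents $\mu$ and $2\mu$ of the upper and lower hypothesis-testing bounds.

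The hard part will be the lower bound $\mathcal{P}\ge\tfrac14\,\mathrm{BC}^2$, i.e.\ the converse direction: Chernoff/Bhattacharyya arguments are naturally one-sided and only deliver upper bounds on error probabilities. To obtain the reverse inequality I would pass through the total-variation distance, using $\mathrm{TV}\le\sqrt{1-\mathrm{BC}^2}$ together with $P_e=\tfrac12(1-\mathrm{TV})$ and $1-\sqrt{1-x}\ge x/2$. The delicate points are keeping track of whether $\mathcal{P}$ denotes the symmetric Bayes error or a one-sided pairwise error (which affects only harmless constant factors) and verifying that the residual additive constant is dominated by $T$ across all parameter ranges.
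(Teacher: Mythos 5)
First, a point of reference: the paper does not actually prove Proposition~\ref{prop:beta} --- it imports it verbatim from Propositions 3 and 4 of \cite{ngoyang} --- so your attempt must be judged on its own merits rather than against an internal argument. On those merits, your right-hand inequality is sound. Identifying $\mathcal{P}(\F_i\rightarrow\F_j)$ with a binary test between zero-mean complex Gaussians of covariances $\R_i=\I_T+\F_i\F_i^{\He}$, $\R_j=\I_T+\F_j\F_j^{\He}$ over $N$ i.i.d.\ columns, the Bhattacharyya computation $\mathrm{BC}=e^{-N\mu}$ with $\mu=\log\det\tfrac{\R_i+\R_j}{2}-\tfrac12\log\det\R_i-\tfrac12\log\det\R_j$, the simultaneous diagonalization giving $\mu=\sum_l\log\cosh(\tfrac12\log\lambda_l)$, and the scalar bounds $\tfrac u2-\log2\le\log\cosh\tfrac u2\le\tfrac u2$ are all correct; moreover the Bhattacharyya upper bound $\mathcal{P}\le\mathrm{BC}$ does hold for the one-sided pairwise error (since $\Pr_i[p_j\ge p_i]\le\int_{\{p_j\ge p_i\}}\sqrt{p_ip_j}\le\mathrm{BC}$), so $-\tfrac1N\log\mathcal{P}\ge\mu\ge\tfrac12\beta-T\log2$ follows cleanly.

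The genuine gap is in the left-hand inequality, precisely at the step you flagged as ``delicate'' and then waved off. $\mathcal{P}(\F_i\rightarrow\F_j)$ is a \emph{one-sided} error, $\Pr[p_j(\Y)\ge p_i(\Y)\mid\F_i]$. Your route via $P_e=\tfrac12(1-\mathrm{TV})$ and $\mathrm{TV}\le\sqrt{1-\mathrm{BC}^2}$ lower-bounds the symmetric Bayes error $\tfrac12[\mathcal{P}(\F_i\rightarrow\F_j)+\mathcal{P}(\F_j\rightarrow\F_i)]$, hence only the \emph{maximum} of the two ordered errors; it says nothing about the particular ordered pair in the statement, which must hold for both orderings since $\beta$ is symmetric. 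And this is not a matter of ``harmless constant factors'': for a general likelihood-ratio test a one-sided error can be arbitrarily smaller than $\mathrm{BC}^2$. For instance, on three atoms take $p_i=(\tfrac12,\,\tfrac12-\epsilon,\,\epsilon)$ and $p_j=(\epsilon,\,\tfrac12-2\epsilon,\,\tfrac12+\epsilon)$: the error region under $p_i$ is the third atom, so $\mathcal{P}(i\rightarrow j)=\epsilon$, while $\mathrm{BC}\ge\sqrt{(\tfrac12-\epsilon)(\tfrac12-2\epsilon)}\approx\tfrac12$, so no generic inequality $\mathcal{P}\ge c\,\mathrm{BC}^2$ exists. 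Closing the gap requires the specific Gaussian structure: whitening $\Y=\R_i^{1/2}\Z$ and diagonalizing $\R_i^{1/2}\R_j^{-1}\R_i^{1/2}$ (whose spectrum is that of $\GG(\F_i,\F_j)$) turns the error event into $\sum_{l=1}^T\left[(1-\lambda_l)g_l+N\log\lambda_l\right]\ge0$ with $g_l$ i.i.d.\ $\mathrm{Gamma}(N,1)$, and one then lower-bounds this probability constructively, confining each $g_l$ to a suitable interval and multiplying the resulting Gamma tail probabilities --- the $e^{-NT}$-type slack in the statement is exactly the price of such a construction, and this eigenvalue/Gamma representation is the route taken in the cited source. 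A secondary, smaller issue: even for the symmetric error your chain gives $-\tfrac1N\log P_e\le\beta+\tfrac{2\log2}{N}$, and $\tfrac{2\log2}{N}\le T$ fails for $T=N=1$; it is saved here only because noncoherent transmission requires $T\ge2$, which should be stated rather than left as ``regimes of interest.''
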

\begin{proposition}\label{prop:delta}
The natural Riemannian distance between $\I_T +\F_i\F_i^{\He}$ and $\I_T +\F_j\F_j^{\He}$, in the manifold of Hermitian positive definite matrices, is $\delta(\F_i,\F_j)$, and $\beta(\F_i,\F_j)$ is bounded by it as
\begin{equation*}
    \sqrt{T}\delta(\F_i,\F_j)\geq \beta(\F_i,\F_j)\geq\delta(\F_i,\F_j).
\end{equation*}
\end{proposition}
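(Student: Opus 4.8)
The plan is to prove the two assertions separately: the identification of $\delta(\F_i,\F_j)$ with the geodesic distance on the cone of Hermitian positive definite matrices, and then the two-sided bound relating $\beta$ to $\delta$, which reduces to an elementary comparison of the $\ell^1$ and $\ell^2$ norms of a single nonnegative vector.

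For the first assertion I would start by observing that, since $\F\F^{\He}$ is positive semidefinite, both $\A:=\I_T+\F_i\F_i^{\He}$ and $\B:=\I_T+\F_j\F_j^{\He}$ are Hermitian positive definite and therefore genuine points of the cone, on which one equips the standard affine-invariant metric. For that metric the geodesic distance has the closed form $d(\A,\B)=\|\log(\B^{-1/2}\A\B^{-1/2})\|_F=\sqrt{\sum_{l=1}^T\log^2\mu_l}$, where the $\mu_l$ are the eigenvalues of the Hermitian positive definite matrix $\B^{-1/2}\A\B^{-1/2}$. The crux is then a short spectral bookkeeping: $\B^{-1/2}\A\B^{-1/2}$ is similar to $\B^{-1}\A$ (conjugate by $\B^{1/2}$), and the spectrum of $\B^{-1}\A$ equals that of $\A\B^{-1}=\GG(\F_i,\F_j)$ because $PQ$ and $QP$ share the same eigenvalues. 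Hence $\mu_l=\lambda_l(\F_i,\F_j)$, these are positive reals (the spectrum of an HPD matrix, which also guarantees the logarithms are well defined), and the distance formula gives $d(\A,\B)=\sqrt{\sum_l\log^2\lambda_l}=\delta(\F_i,\F_j)$ as claimed.

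For the second assertion I would set $a_l:=|\log\lambda_l(\F_i,\F_j)|\geq 0$ and collect them into the vector $\mathbf a=(a_1,\dots,a_T)$, so that $\beta(\F_i,\F_j)=\sum_l a_l=\|\mathbf a\|_1$ and $\delta(\F_i,\F_j)=\sqrt{\sum_l a_l^2}=\|\mathbf a\|_2$. The lower bound $\beta\geq\delta$ is the inequality $\|\mathbf a\|_1\geq\|\mathbf a\|_2$, immediate from $(\sum_l a_l)^2=\sum_l a_l^2+\sum_{l\neq m}a_la_m\geq\sum_l a_l^2$ using $a_l\geq 0$. The upper bound $\beta\leq\sqrt T\,\delta$ is Cauchy--Schwarz applied to $\mathbf a$ and the all-ones vector, giving $\sum_l a_l\leq\sqrt T\,(\sum_l a_l^2)^{1/2}=\sqrt T\,\|\mathbf a\|_2$.

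The argument is essentially routine once the affine-invariant distance formula is available, so the only step demanding genuine care is the eigenvalue bookkeeping in the first part: confirming that the $\lambda_l$ defining $\delta$, which are the spectrum of the generally non-Hermitian product $\GG(\F_i,\F_j)=\A\B^{-1}$, coincide with the positive eigenvalues of the Hermitian congruence $\B^{-1/2}\A\B^{-1/2}$ that enters the geodesic distance. The two-sided bound is then nothing more than the standard norm comparison and needs no further machinery.
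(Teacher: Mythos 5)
Your proof is correct, but note that the paper itself never proves this statement: it is imported verbatim from Propositions 3 and 4 of the cited Ngo--Yang reference, so there is no in-paper argument to compare against. Your reconstruction supplies exactly the missing content, and it does so cleanly. The first part correctly invokes the affine-invariant distance $d(\A,\B)=\|\log(\B^{-1/2}\A\B^{-1/2})\|_F$ on the Hermitian positive definite cone, and the key bookkeeping step is sound: $\B^{-1/2}\A\B^{-1/2}$ is similar to $\B^{-1}\A$, whose spectrum agrees with that of $\A\B^{-1}=\GG(\F_i,\F_j)$ since $PQ$ and $QP$ are isospectral; hence the $\lambda_l$ in the definition of $\delta$ are the eigenvalues of an HPD matrix and thus strictly positive. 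In fact your argument shows slightly more than the paper's surrounding text claims: since $\I_T+\F\F^{\He}\succeq\I_T$ is \emph{always} positive definite, the eigenvalues are positive for all codewords, not merely outside a measure-zero set, so the logarithms in $\beta$ and $\delta$ are always well defined. The second part, identifying $\beta=\|\mathbf{a}\|_1$ and $\delta=\|\mathbf{a}\|_2$ for $a_l=|\log\lambda_l|$ and applying the elementary comparison $\|\mathbf{a}\|_2\leq\|\mathbf{a}\|_1\leq\sqrt{T}\,\|\mathbf{a}\|_2$ (the upper bound by Cauchy--Schwarz against the all-ones vector), is exactly the right reduction and needs nothing further.
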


By multiplying by $-N$ and taking exponentials, Eq. (\ref{eq:betaIneqPEP}) becomes
$$
\exp[-N\beta(\F_i,\F_j)]\exp[ - NT] \leq \mathcal{P}(\F_i\rightarrow\F_j)\leq \exp\left[\frac{-N}{2}\beta(\F_i,\F_j)\right]\exp[ NT\log 2],
$$
Since $\delta$ is in turn a bound on $\beta$, these relations lead us to expect a leading order behavior such as
$$
\mathcal{P}(\F_i\rightarrow\F_j) \sim \exp[-N\beta(\F_i,\F_j)]\quad\text{ and } \quad\mathcal{P}(\F_i\rightarrow\F_j) \sim\exp[-N\delta(\F_i,\F_j)],
$$
and by summing over all pairs of joint codewords and taking logarithms, the inequality yields
$$
e^{- NT}\sum_{\F_i\neq \F_j}\exp[-N\beta(\F_i,\F_j)] \leq \sum_{\F_i\neq \F_j}\mathcal{P}(\F_i\rightarrow\F_j)\leq e^{NT\log 2}\sum_{\F_i\neq \F_j}\exp\left[\frac{-N}{2}\beta(\F_i,\F_j)\right].
$$
This provides bounds on the union bound PEP and on its leading exponent by taking logarithms, which justify our definition of $\beta_{UB}$ in Eq. (\ref{eq:betaub}) as a figure of merit to optimize (the factor $1/2$ is irrelevant in the normalized gradients). Similarly for $\delta_{UB}$.

Therefore the criteria that we propose are
\begin{equation}
    \underset{\mathcal{C}_1, \dotsc, \mathcal{C}_K}{\argmin}\;\delta_{UB}(\mathcal{C}),
\end{equation}
and
\begin{equation}
    \underset{\mathcal{C}_1, \dotsc, \mathcal{C}_K}{\argmin}\;\beta_{UB}(\mathcal{C}),
\end{equation}
where $\mathcal{C}$ is built up out of the concatenation of $K$ codewords $\X_{k,i}\in\mathcal{C}_k$, one from each users's constellation, which is a set of points in the corresponding manifold.

For completeness, since it is used in the comparisons of our simulation experiments in Sec. \ref{sec:results}, we also introduce another cost function proposed and optimized in \cite{ngoyang}:
\begin{equation}
    J_{1/2}(\F_i,\G_j) = \frac{1}{2}\log\det( 2\I_T + (\I_T +\F_j\F_j^{\He})^{-1}(\I_T +\F_i\F_i^{\He}) + (\I_T +\F_i\F_i^{\He})^{-1}(\I_T +\F_j\F_j^{\He}) ) -T\log 2.
\end{equation}

\subsection{Gradient computation}\label{subsec:proxy_grad}

In order to perform a gradient descent on the cost functions defined above, we need to compute their derivatives along directions tangent to the manifolds of interest. First of all, we must ensure that the $\lambda_l(\F_i,\F_j)$ functions are smooth almost everywhere, which can be proved on any chart by using the locally Lipschitz property and Rademacher's theorem \cite[Th. 3.1.6]{federer}. Even though this guarantees that the gradients are well-defined except for a subset of measure zero, at every iteration of the algorithm, the explicit computation of the derivatives for our manifolds of interest is essentially intractable using manifold charts. Instead of differentiating the restricted functions on the manifold, we use a simpler method based on differentiation on the ambient space and then projecting down the unrestricted gradient to the manifold, which is justified by the excellent optimization results it provides (see Sec. \ref{subsec:resultsProxies}). This method relies on the following fundamental result (notice that multiple eigenvalues happen only for a subset of measure zero in the space of matrices).

\begin{lemma}\label{lem:eigendiff}
 Let $\GG(t)\in\mathbb{C}^{T\times T}$ be a matrix function with eigenvalue system $\GG(0)\mathbf{v}_0 = \lambda_0\mathbf{v}_0$, such that $\lambda_0$ is a simple eigenvalue with associated eigenvector $\mathbf{v}_0$. Then there are functions $\lambda(t),\mathbf{v}(t)$ defined for all $\GG$ in a neighbourhood of $\GG(0)$ such that $\lambda(0)=\lambda_0,\; \mathbf{v}(0)= \mathbf{v}_0$ and $\GG(t)\mathbf{v}(t) = \lambda(t)\mathbf{v}(t)$, with $\mathbf{v}_0^{\He}\mathbf{v}(t)=1$. Moreover, these functions are infinitely differentiable with derivatives:
\begin{equation}
    \frac{d\lambda}{dt}\biggr\vert_{t=0}= \frac{1}{\u^{\He}_0\mathbf{v}_0}\u^{\He}_0\cdot \frac{d\GG}{dt}\biggr\vert_{t=0}\cdot\mathbf{v}_0  
\end{equation}
and
\begin{equation}
     \frac{d\mathbf{v}}{dt}\biggr\vert_{t=0} = (\lambda\I_T -\GG)^+\left(\I_T -\frac{\mathbf{v}_0\u^{\He}_0}{\u^{\He}_0\mathbf{v}_0} \right)\frac{d\GG}{dt}\biggr\vert_{t=0}\cdot\mathbf{v}_0 
\end{equation}
where $\u_0$ is the eigenvector associated to the eigenvalue $\lambda^*_0$ of $\GG^{\He}(0)$, i.e. $\GG^{\He}(0)\u_0=\lambda^*_0\u_0$, and $(\cdot)^+$ is the Moore-Penrose pseudo-inverse.
\end{lemma}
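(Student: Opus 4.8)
The plan is to realize the pair $(\lambda(t),\mathbf{v}(t))$ as the composition of the curve $\GG(t)$ with a locally defined smooth solution map supplied by the Implicit Function Theorem, and then to read off the derivative formulas by differentiating the defining relations. First I would introduce the map
$$
F(\GG,\mathbf{v},\lambda) = \begin{pmatrix} \GG\mathbf{v}-\lambda\mathbf{v} \\ \mathbf{v}_0^{\He}\mathbf{v}-1 \end{pmatrix},
$$
which vanishes at the base point $(\GG(0),\mathbf{v}_0,\lambda_0)$ and whose zero set consists exactly of normalized eigenpairs. Every entry of $F$ is polynomial, hence holomorphic, in the arguments, so the analytic Implicit Function Theorem applies as soon as the partial derivative of $F$ with respect to $(\mathbf{v},\lambda)$ is invertible at the base point.

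That partial Jacobian is the bordered matrix $\begin{pmatrix} \GG(0)-\lambda_0\I & -\mathbf{v}_0 \\ \mathbf{v}_0^{\He} & 0 \end{pmatrix}$, and establishing its invertibility is the crux of the argument. I would test it against a kernel vector $(\mathbf{w},\mu)$: left-multiplying the top block by $\u_0^{\He}$ and using $\u_0^{\He}\GG(0)=\lambda_0\u_0^{\He}$ annihilates the $\mathbf{w}$ contribution and forces $\mu\,(\u_0^{\He}\mathbf{v}_0)=0$. The only nontrivial input here is the standard fact that for a simple eigenvalue the left and right eigenvectors are not orthogonal, i.e. $\u_0^{\He}\mathbf{v}_0\neq 0$, whence $\mu=0$; then $\mathbf{w}\in\ker(\GG(0)-\lambda_0\I)=\mathrm{span}(\mathbf{v}_0)$, and the bottom row $\mathbf{v}_0^{\He}\mathbf{w}=0$ forces $\mathbf{w}=0$. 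Invertibility then yields analytic maps $\lambda(\GG),\mathbf{v}(\GG)$ on a neighbourhood of $\GG(0)$, and composing with the smooth curve $\GG(t)$ gives the claimed infinitely differentiable $\lambda(t),\mathbf{v}(t)$.

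For the derivative formulas I would differentiate $\GG(t)\mathbf{v}(t)=\lambda(t)\mathbf{v}(t)$ at $t=0$ to obtain $\dot{\GG}\mathbf{v}_0+(\GG(0)-\lambda_0\I)\dot{\mathbf{v}}=\dot{\lambda}\mathbf{v}_0$. Left-multiplying by $\u_0^{\He}$ kills the $(\GG(0)-\lambda_0\I)\dot{\mathbf{v}}$ term and solves at once for $\dot{\lambda}=(\u_0^{\He}\mathbf{v}_0)^{-1}\u_0^{\He}\dot{\GG}\mathbf{v}_0$. Substituting this back rearranges the relation into $(\lambda_0\I-\GG(0))\dot{\mathbf{v}}=\left(\I-\frac{\mathbf{v}_0\u_0^{\He}}{\u_0^{\He}\mathbf{v}_0}\right)\dot{\GG}\mathbf{v}_0$; since $\u_0$ spans $\ker(\lambda_0\I-\GG(0))^{\He}$, the range of $\lambda_0\I-\GG(0)$ is $\{\mathbf{x}:\u_0^{\He}\mathbf{x}=0\}$, and one checks the right-hand side is annihilated by $\u_0^{\He}$, so the equation is solvable. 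Applying the Moore-Penrose pseudo-inverse produces the minimum-norm solution, which lies in $\mathrm{range}(\lambda_0\I-\GG(0))^{\He}=(\ker)^{\perp}=\{\mathbf{x}:\mathbf{v}_0^{\He}\mathbf{x}=0\}$, hence recovers the stated $\dot{\mathbf{v}}=(\lambda_0\I-\GG(0))^{+}\left(\I-\frac{\mathbf{v}_0\u_0^{\He}}{\u_0^{\He}\mathbf{v}_0}\right)\dot{\GG}\mathbf{v}_0$.

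The main obstacle is the invertibility of the bordered Jacobian, where simplicity of $\lambda_0$ enters twice: through $\u_0^{\He}\mathbf{v}_0\neq 0$ and through the one-dimensionality of $\ker(\GG(0)-\lambda_0\I)$. A secondary point needing care is matching the pseudo-inverse branch to the normalization: differentiating $\mathbf{v}_0^{\He}\mathbf{v}(t)=1$ gives $\mathbf{v}_0^{\He}\dot{\mathbf{v}}=0$, which is precisely the orthogonality that singles out the minimum-norm solution, so no spurious multiple of $\mathbf{v}_0$ survives and the formula is unambiguous.
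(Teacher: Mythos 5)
Your proof is correct, and it is essentially the argument behind the paper's own ``proof,'' which consists solely of a citation to Magnus's theorem on differentiating eigenvalues and eigenvectors: the classical proof in that reference proceeds exactly as you do, applying the implicit function theorem to the bordered system $\bigl(\GG\mathbf{v}-\lambda\mathbf{v},\ \mathbf{v}_0^{\He}\mathbf{v}-1\bigr)$ and invoking $\u_0^{\He}\mathbf{v}_0\neq 0$ for a simple eigenvalue to invert the bordered Jacobian. Your handling of the two delicate points --- solvability of $(\lambda_0\I_T-\GG(0))\dot{\mathbf{v}}=\bigl(\I_T-\tfrac{\mathbf{v}_0\u_0^{\He}}{\u_0^{\He}\mathbf{v}_0}\bigr)\dot{\GG}\mathbf{v}_0$ via $\u_0^{\He}$-annihilation of the right-hand side, and the identification of the pseudo-inverse (minimum-norm) solution with the one singled out by the differentiated normalization $\mathbf{v}_0^{\He}\dot{\mathbf{v}}=0$ --- is exactly what makes the stated formulas unambiguous, so nothing is missing.
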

\begin{proof}
See, e.g., \cite[theorem 2]{magnus}.
\end{proof}

With all these tools and the notation conventions of the previous section, we can derive the explicit gradients of any of the proposed cost functions for the $K$-user MIMO MAC.

\begin{theorem}
\label{th:graddelta}
When  $\GG_{ij}:=\GG(\F_i,\F_j)$ satisfy the conditions of lemma \ref{lem:eigendiff} for every $\F_i,\F_j\in\mathcal{C}$, the unconstrained partial derivative of the function $\delta_{UB}$, with respect to the single-user codeword $\X$, is  given by
\begin{equation}
       D_{\X}\,\delta_{UB}(\mathcal{C}) = \sum_{\mathbf{F}_i\neq\mathbf{F}_j\in\mathcal{C}}\frac{-N\exp[-N\delta(\F_i,\F_j)]}{\exp[\delta_{UB}(\mathcal{C})]\delta(\F_i,\F_j)}\sum_{l=1}^T\frac{\log(\lambda_l)}{\lambda_l\mathbf{u}^{\He}_l\mathbf{v}_l}\mathbf{u}^{\He}_l\cdot \frac{\partial\mathbf{\Gamma}_{ij}}{\partial\X}\cdot\mathbf{v}_l,
\end{equation}
where the $\lambda_l$, $\u_l$ and $\mathbf{v}_l$ are for $\GG_{ij}$ as defined in Lemma \ref{lem:eigendiff}, and $\partial\mathbf{\Gamma}_{ij}/\partial\X$ is a matrix of matrices that depend on whether $\F_i$ or $\F_j$ include, one or both, the single-user symbol $\X$ as a block of its columns. In detail, let $\EE_{mn}$ be the canonical real matrix basis, i.e. $[\EE_{mn}]_{ij}=[\delta_{mi}\delta_{nj}]_{ij}$, for $i,j,m,n$ indices from $1$ to $T$, then:
\begin{itemize}
\item If $\X$ is the symbol in $\F_i$ of an user in error (i.e. a differing block with respect to $\F_j$):
\begin{align}
    \left[\frac{\partial\mathbf{\Gamma}_{ij}}{\partial\X}\right]_{mn} & = (\EE_{mn}\X^{\He} + \X\EE^{\He}_{mn})(\I_T + \F_j\F_j^{\He})^{-1}.
\end{align}
\item If $\X$ is the symbol in $\F_j$ of an user in error (i.e. a differing block with respect to $\F_i$):
\begin{align}
    \left[\frac{\partial\mathbf{\Gamma}_{ij}}{\partial\X}\right]_{mn} & = -\GG_{ij}(\EE_{mn}\X^{\He} + \X\EE^{\He}_{mn})(\I_T + \F_j\F_j^{\He})^{-1}.
\end{align}

\item If $\X$ is not a symbol in error (i.e. a common block between $\F_i$ and $\F_j$):
\begin{align}
    \left[\frac{\partial\mathbf{\Gamma}}{\partial\X}\right]_{mn} = (\I_T - \GG_{ij})(\EE_{mn}\X^{\He} + \X\EE^{\He}_{mn})(\I_T + \F_j\F_j^{\He})^{-1}.
\end{align}
\end{itemize}
\end{theorem}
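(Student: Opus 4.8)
The plan is to obtain the formula by an iterated chain rule, stripping off one layer of the composite $\delta_{UB}$ at a time until only the matrix $\GG_{ij}$ remains to be differentiated, and then to compute the component derivatives of $\GG_{ij}$ directly in the three cases. First I would differentiate the outermost layer $\delta_{UB}(\mathcal{C}) = \log\left[\sum_{\F_i\neq\F_j}\exp(-N\delta(\F_i,\F_j))\right]$. Writing the inner sum as $\exp[\delta_{UB}(\mathcal{C})]$, the derivatives of the logarithm and the exponential produce, for each pair, the scalar prefactor $-N\exp[-N\delta(\F_i,\F_j)]/\exp[\delta_{UB}(\mathcal{C})]$ multiplying $D_{\X}\delta(\F_i,\F_j)$; the sum over pairs is carried along unchanged, since within each summand only the factor depending on $\X$ contributes.

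Next I would peel off the middle layer $\delta(\F_i,\F_j) = \sqrt{\sum_l \log^2\lambda_l}$. The chain rule through the square root and the squared logarithm gives $D_{\X}\delta = \frac{1}{\delta}\sum_l \frac{\log\lambda_l}{\lambda_l}\,D_{\X}\lambda_l$, which accounts for both the $1/\delta(\F_i,\F_j)$ in the denominator of the stated prefactor and the weights $\log\lambda_l/\lambda_l$. At this point I invoke Lemma \ref{lem:eigendiff}: since $\GG_{ij} = (\I_T+\F_i\F_i^{\He})(\I_T+\F_j\F_j^{\He})^{-1}$ is a product of Hermitian matrices and hence is itself not Hermitian, its right eigenvector $\mathbf{v}_l$ and left eigenvector $\u_l$ do not coincide, and the first-order perturbation of the simple eigenvalue $\lambda_l$ reads $D_{\X}\lambda_l = (\u_l^{\He}\mathbf{v}_l)^{-1}\u_l^{\He}\,(D_{\X}\GG_{ij})\,\mathbf{v}_l$. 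Substituting this reproduces the sandwiched factor $\u_l^{\He}(\partial\GG_{ij}/\partial\X)\mathbf{v}_l$ and the remaining $1/(\u_l^{\He}\mathbf{v}_l)$ weight of the statement, valid wherever the eigenvalues are simple, i.e. off a measure-zero set as required by the hypotheses.

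It remains to compute $\partial\GG_{ij}/\partial\X$, the matrix whose $(m,n)$ block is the derivative of $\GG_{ij}$ in the direction of the canonical matrix $\EE_{mn}$. The observation that organizes the three cases is that a single-user codeword $\X$ enters $\F$ only through the outer-product $\F\F^{\He}$, whose derivative in the direction $\EE_{mn}$ is $\EE_{mn}\X^{\He}+\X\EE_{mn}^{\He}$ when $\X\in\F$ and zero otherwise. If $\X$ is an error block appearing only in $\F_i$, the right factor $(\I_T+\F_j\F_j^{\He})^{-1}$ is constant and I differentiate only the left factor, giving the first case. If $\X$ is an error block appearing only in $\F_j$, I differentiate $(\I_T+\F_j\F_j^{\He})^{-1}$ via the inverse-derivative identity \eqref{eq:derivativeinverse}; the leading factor $(\I_T+\F_i\F_i^{\He})(\I_T+\F_j\F_j^{\He})^{-1}$ then collapses to $\GG_{ij}$, yielding the $-\GG_{ij}$ prefactor of the second case. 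If $\X$ is a common (correct) block it sits identically in both $\F_i$ and $\F_j$, so both the product-rule term and the inverse-derivative term appear; combining and factoring gives $(\I_T-\GG_{ij})(\EE_{mn}\X^{\He}+\X\EE_{mn}^{\He})(\I_T+\F_j\F_j^{\He})^{-1}$, the third case. Stacking the three layers recovers the claimed gradient.

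The hard part is conceptual rather than computational, and it is the eigenvalue-differentiation step: $\GG_{ij}$ is non-Hermitian, so a naive symmetric eigenvalue perturbation would be wrong, and one must use the two-sided left/right eigenvector formula. Since Lemma \ref{lem:eigendiff} supplies exactly this formula, the surrounding work reduces to careful bookkeeping — tracking which of $\F_i,\F_j$ contains $\X$ to select the correct case, and keeping the simplicity (and pseudo-inverse) hypotheses in force uniformly across the sum over codeword pairs.
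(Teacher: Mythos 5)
Your proposal is correct and takes essentially the same route as the paper's own proof: the same three-layer chain rule (the log-sum-exp outer layer, then the square root of summed squared log-eigenvalues, then the two-sided left/right-eigenvector perturbation formula supplied by Lemma \ref{lem:eigendiff}), followed by the same case analysis of $\partial\GG_{ij}/\partial\X$ via the inverse-derivative identity and the product rule, with the common-block case obtained as the sum of the two error-block cases. No gaps to report.
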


\begin{proof}
Only the terms in the sum that correspond to joint codewords $\F_i$ or $\F_j$ containing $\X$ are nonzero in the derivative, hence we have
$$
\frac{d}{dt}\bigg\vert_{t=0}\delta_{UB}(\mathcal{C}(t)\vert\X+t\dot{\Z}) = \sum_{\substack{\F_i\neq \F_j\in\mathcal{C} \\ \X\in\F_i\text{ or }\F_j }}\frac{-N\exp[-N \delta(\F_i,\F_j)]}{\exp[\delta_{UB}(\mathcal{C})]}\cdot\,\frac{d}{dt}\bigg\vert_{t=0}\delta(\F_i,\F_j\vert\X+t\dot\Z),
$$
where
$$
\exp[\delta_{UB}(\mathcal{C})] = \sum_{\mathbf{F}_i\neq\mathbf{F}_j\in\mathcal{C}}\exp[-N \delta(\F_i,\F_j)].
$$
Notice that one must pay attention to whether the symbol $\X$ is included in the multiuser codeword $\F_i$, $\F_j$ or both, so each term in the previous sum should actually be written $\delta(\F_i\vert\X+t\dot\Z,\F_j)$, $\delta(\F_i,\F_j\vert\X+t\dot\Z)$ or a variation on both depending on the case for each summand.
Now, without loss of generality
$$
\frac{d}{dt}\bigg\vert_{t=0}\delta(\F_i,\F_j\vert\X+t\dot\Z) = \frac{1}{2}\left[\sum_{l=1}^T(\log\lambda_l)^2\right]^{-\frac{1}{2}}\sum_{l=1}^T 2\frac{\log\lambda_l}{\lambda_l}\;\frac{d}{dt}\bigg\vert_{t=0}\lambda_l(\F_i,\F_j\vert\X+t\dot\Z),
$$
and, using the notation and discussion from Lemma \ref{lem:eigendiff}, we know that in the generic case we can differentiate the eigenvalues with respect to every component of the matrix $\X$, in the direction $\dot\Z_{mn}$, to obtain
$$
\frac{d}{dt}\bigg\vert_{t=0}\lambda_l(\F_i,\F_j\vert\X+t\dot\Z_{mn}) = \frac{1}{\mathbf{u}^{\He}_l\mathbf{v}_l}\mathbf{u}^{\He}_l\cdot \left[\frac{\partial\mathbf{\Gamma}(\F_i,\F_j)}{\partial \X}\right]_{mn}\cdot\mathbf{v}_l.
$$
Notice that for any joint codeword, e.g. $\F_i=[\X_{1,i_1}\dots\X\dots\X_{K,i_K}]$, where our chosen $\X$ corresponds to some symbol $i_r$ of some user $r$, i.e. $\X=\X_{r,i_r}\in\mathcal{C}_r$, then the factors of $\GG_{ij}$ expand by blocks as
$$
\F_i\F_i^{\He} = \X\X^{\He} + \sum_{k\neq r}^{K} \X_{k,i_k}\X_{k,i_k}^{\He},
$$
where the block $\X$ is the symbol of the user of interest, and similarly for $\F_j$. Thus we must take derivatives of $\GG_{ij}$ with respect to an $\X$ that either belongs to only one of the sums of the above expansions of $\F_i\F_i^{\He}$ or $\F_i\F_i^{\He}$, or to both, which then requires the derivative of a product.

Let us assume first that $\X$ is a symbol in $\F_i$ of an user in error, i.e. it does not appear in $\F_j$, then the partial derivative of $\GG_{ij}$ with respect to the component $(m,n)$ of the symbol $\X$ is 
\begin{align*}
\left[\frac{\partial\mathbf{\Gamma}(\F_i,\F_j)}{\partial\X}\right]_{mn} & = \frac{d}{dt}\bigg\vert_{t=0}\mathbf{\Gamma}(\F_i\vert\X+t\dot\Z_{mn},\F_j) \\
\\
& = \frac{d}{dt}\bigg\vert_{t=0}\left(\I_T +(\X+t\dot\Z_{mn})(\X^{\He}+t\dot\Z_{mn}^{\He}) + \sum_{k\neq r}^{K} \X_{k,i_k}\X_{k,i_k}^{\He}\right)(\I_T +\F_j\F_j^{\He})^{-1} \\
\\
& = (\dot\Z_{mn}\X^{\He} + \X\dot\Z_{mn}^{\He})(\I_T +\F_j\F_j^{\He})^{-1},
\end{align*}
which yields the first of the formulas of the theorem. Similarly, when $\X$ is a symbol in $\F_j$ of an user in error, i.e. not appearing in $\F_i$, one obtains
\begin{align*}
& \left[\frac{\partial\mathbf{\Gamma}(\F_i,\F_j)}{\partial\X}\right]_{mn} = \frac{d}{dt}\bigg\vert_{t=0}\mathbf{\Gamma}(\F_i,\F_j\vert\X+t\dot\Z_{mn}) \\
\\
& = \frac{d}{dt}\bigg\vert_{t=0}(\I_T +\F_i\F_i^{\He})\left(\I_T +(\X+t\dot\Z_{mn})(\X^{\He}+t\dot\Z_{mn}^{\He}) + \sum_{k\neq r}^{K} \X_{k,j_k}\X_{k,j_k}^{\He}\right)^{-1} 
\end{align*}
\begin{align*}
& = (\I_T +\F_i\F_i^{\He})(-1)(\I_T +\F_j\F_j^{\He})^{-1}\frac{d}{dt}\bigg\vert_{t=0}\left(\I_T +(\X+t\dot\Z_{mn})(\X^{\He}+t\dot\Z_{mn}^{\He}) + \sum_{k\neq r}^{K-1} \X_{k,j_k}\X_{k,j_k}^{\He}\right) (\I_T +\F_j\F_j^{\He})^{-1} \\
\\
& = -\GG_{ij} (\dot\Z_{mn}\X^{\He} + \X\dot\Z_{mn}^{\He})(\I_T +\F_j\F_j^{\He})^{-1},
\end{align*}
which provides the second formula of the theorem.
In the final case, $\X$ is a common block between $\F_i$ and $\F_j$ corresponding to a symbol of an user not in error, and therefore it appears in both terms of $\GG_{ij}$ so that
\begin{align*}
& \left[\frac{\partial\mathbf{\Gamma}(\F_i,\F_j)}{\partial\X}\right]_{mn} = \frac{d}{dt}\bigg\vert_{t=0}\mathbf{\Gamma}(\F_i\vert\X+t\dot\Z_{mn},\F_j\vert\X+t\dot\Z_{mn}) \\
\\
& = \frac{d}{dt}\bigg\vert_{t=0}\!\!\!\left(\I_T +(\X+t\dot\Z_{mn})(\X^{\He}+t\dot\Z_{mn}^{\He}) +\!\! \sum_{k\neq r}^{K} \X_{k,i_k}\X_{k,i_k}^{\He}\right)\!\!\!\left(\I_T +(\X+t\dot\Z_{mn})(\X^{\He}+t\dot\Z_{mn}^{\He}) +\!\! \sum_{k\neq r}^{K} \X_{k,j_k}\X_{k,j_k}^{\He}\right)^{-1}
\end{align*}
which by the product rule results in the sum between the two formulas of the other cases obtained above, yielding the last equation of the theorem.
\end{proof}

\begin{theorem}
Using the same conventions and assumptions as in Theorem \ref{th:graddelta}, the gradient of the union bound proxy function $\beta_{UB}$ is given by:
\begin{equation}
       D_{\X}\,\beta_{UB}(\mathcal{C}) = \sum_{\mathbf{X}\neq\mathbf{Y}\in\mathcal{C}}\frac{-N\exp[-N\beta(\F,\G)]}{\exp[\beta_{UB}(\mathcal{C})]}\sum_{i=1}^T\frac{\text{sign}(\log\lambda_i)}{\lambda_i\mathbf{u}^{\He}_i\mathbf{v}_i}\mathbf{u}^{\He}_i\cdot \frac{\partial\mathbf{\Gamma}(\F,\G)}{\partial\X}\cdot\mathbf{v}_i,
\end{equation}
for $\X = \A,\B,\C,\D$ from $\F =[\A\;\B],\;\G =[\C\;\D]\in\mathcal{C}_1\times\mathcal{C}_2$.

\end{theorem}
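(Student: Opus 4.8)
The plan is to follow the same two-stage differentiation used in the proof of Theorem~\ref{th:graddelta}, exploiting that $\beta_{UB}$ and $\delta_{UB}$ share the identical log-sum-exp outer structure and the identical inner object $\GG_{ij}$; only the scalar combination of the eigenvalue logarithms differs. First I would differentiate the outer layer. Writing $\exp[\beta_{UB}(\mathcal{C})]=\sum_{\F_i\neq\F_j}\exp[-N\beta(\F_i,\F_j)]$ and applying the chain rule to the logarithm, only the pairwise terms in which $\X$ actually appears as a block of columns of $\F_i$ or $\F_j$ survive, giving
$$
\frac{d}{dt}\bigg\vert_{t=0}\beta_{UB}(\mathcal{C}(t)\vert\X+t\dot\Z) = \sum_{\substack{\F_i\neq\F_j\in\mathcal{C}\\\X\in\F_i\text{ or }\F_j}}\frac{-N\exp[-N\beta(\F_i,\F_j)]}{\exp[\beta_{UB}(\mathcal{C})]}\cdot\frac{d}{dt}\bigg\vert_{t=0}\beta(\F_i,\F_j\vert\X+t\dot\Z),
$$
exactly mirroring the first display in the proof of Theorem~\ref{th:graddelta}.

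The second stage is where $\beta$ departs from $\delta$. Since $\beta(\F_i,\F_j)=\sum_{l=1}^T|\log\lambda_l|$ is a sum of absolute values rather than the $\ell^2$-type norm defining $\delta$, the chain rule now produces a sign factor instead of the $\log\lambda_l/\delta$ weight: for each $\lambda_l\neq 1$ one has $\frac{d}{dt}|\log\lambda_l|=\mathrm{sign}(\log\lambda_l)\,\lambda_l^{-1}\,\frac{d\lambda_l}{dt}$, and in particular the overall $1/\delta(\F_i,\F_j)$ prefactor coming from differentiating a square root is simply absent here. I would then substitute the eigenvalue derivative from Lemma~\ref{lem:eigendiff}, $\frac{d\lambda_l}{dt}=(\u_l^{\He}\mathbf{v}_l)^{-1}\u_l^{\He}\,\frac{d\GG_{ij}}{dt}\,\mathbf{v}_l$, evaluated in the coordinate direction $\dot\Z_{mn}$, which assembles the stated formula with $\mathrm{sign}(\log\lambda_l)$ in place of $\log(\lambda_l)$ in the inner sum and no $1/\delta$ normalization.

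Crucially, the matrix-of-matrices $\partial\GG(\F,\G)/\partial\X$ appearing in the result is \emph{unchanged} from Theorem~\ref{th:graddelta}, because $\GG_{ij}=(\I_T+\F_i\F_i^{\He})(\I_T+\F_j\F_j^{\He})^{-1}$ makes no reference to the cost function. Hence the three cases — $\X$ a symbol in error within $\F_i$, within $\F_j$, or a common (correct) block appearing in both — reproduce verbatim the three block expressions derived there, through the same product-rule and derivative-of-an-inverse computations; this step may simply be cited rather than recomputed, and for $\X=\A,\B,\C,\D$ it fixes which of the three cases applies to each summand.

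The only genuinely new point, and the main obstacle, is the non-smoothness of $|\log\lambda_l|$ at $\lambda_l=1$, where $\log\lambda_l=0$ and the absolute value has a corner. This is handled exactly as flagged before Lemma~\ref{lem:eigendiff}: the eigenvalues are locally Lipschitz, so by Rademacher's theorem the map is differentiable almost everywhere, and the coincidence $\lambda_l=1$ occurs only on a measure-zero subset of configurations. Thus $\mathrm{sign}(\log\lambda_l)$ is well defined off this negligible set and the gradient formula holds generically, which is all that is required for the steepest-descent scheme of Algorithm~\ref{alg:RiemanMAC}.
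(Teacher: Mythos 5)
Your proof is correct and takes essentially the same route as the paper, whose own proof merely states that the argument is ``exactly analogous'' to that of Theorem~\ref{th:graddelta} with a modified chain rule --- precisely what you carry out explicitly: the outer log-sum-exp differentiation is unchanged, the inner weight $\log\lambda_l/(\lambda_l\,\delta)$ is replaced by $\mathrm{sign}(\log\lambda_l)/\lambda_l$ with no square-root prefactor, and the block formulas for $\partial\mathbf{\Gamma}/\partial\X$ are cited verbatim from Theorem~\ref{th:graddelta}. The single point of divergence is the treatment of the kink of $|\log\lambda_l|$ at $\lambda_l=1$: you rely on the locally Lipschitz/Rademacher, measure-zero argument (which the paper invokes at the start of the section for the eigenvalue maps), whereas the paper's proof instead proposes regularizing $\beta$ by replacing $|\log\lambda_l|$ with $|\log\lambda_l|^{1+\epsilon}$ for small $\epsilon$; both devices suffice for the generic gradient formula needed by the descent algorithm.
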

\begin{proof}
The proof is exactly analogous to the previous theorem, simply taking into account a different chain rule that also results in terms of the gradient of $\mathbf\Gamma$. It is worth noticing that we can make the expression with the absolute value of the logarithm to become differentiable by considering $\beta$ given in terms of $|\log\lambda_i|^{1+\epsilon}$ instead, for small values of $\epsilon$.

\end{proof}

\section{Results}\label{sec:results}

\subsection{Noncoherent multiuser constellation designs in non-full diversity scenarios}\label{subsec:resultsProxies}

We first assess the performance of the multiuser designs for the MIMO MAC proposed in Sec. \ref{sec:proxies}, which are obtained by optimizing the cost functions $\beta_{UB}$ and $\delta_{UB}$, see Eq. (\ref{eq:betaub}) and Eq. (\ref{eq:deltaub}), respectively. We also take into account the different power normalization constraints mentioned in Sec. \ref{sec:manifolds}, so that the optimization is carried out over the corresponding submanifolds, a feature that yields outstanding differences in the case of non-full diversity. For ease of exposition, the simulations are restricted to a 2-user MIMO MAC. In every case the SER performance refers to the average performance of the two users after applying at the BS the ML multiuser detector.

In Fig. \ref{fig:Manifolds_T5M2N3L16_beta_delta} the performance of constellations designed over different manifolds is shown for $T=5$ symbol periods, $M=2$ transmit antennas, $N=3$ receive antennas, and $L=16$ codewords for each user. One can immediately appreciate that both cost functions perform better when the power constraint is relaxed: the USTM codewords from the Grassmann manifold are the worst performing designs, whereas those that constrain the average power are the best performing ones, with the per-codeword power constrained designs yielding an intermediate performance. The oblique and trace manifold constraints result in constellations with performance improvement ranging from half an order to an order of magnitude better SER. Moreover, the $\delta_{UB}$ criterion clearly outperforms the $\beta$ criterion significantly, except for the Grassmann manifold. Overall,  Fig. \ref{fig:Manifolds_T5M2N3L16_beta_delta} shows that in the non-full diversity case the manifold on which the optimization is carried out plays a very significant role. In other words, using codewords with different powers may produce codebooks with much better performance in multiuser scenarios.

\begin{figure}[H]
    \centering
\includegraphics[width=.6\textwidth]{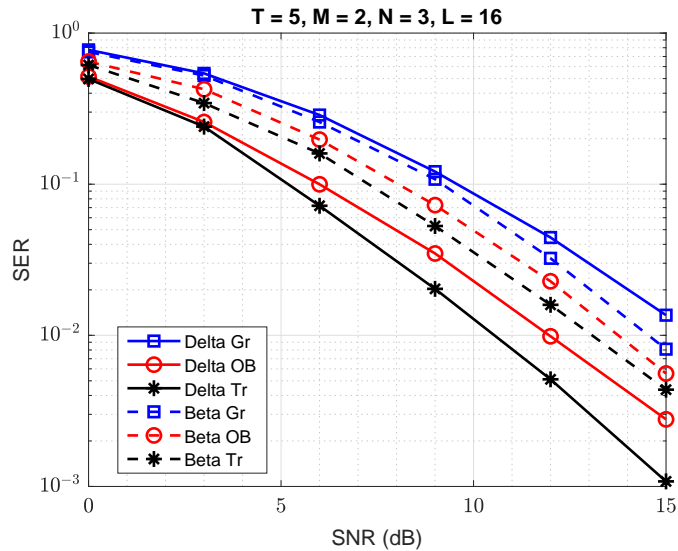}
     \caption{Multiuser codebook performance of $\beta$ and $\delta$ optimization designs in different Riemannian manifolds. The scenario represents a 2-user MIMO MAC with $T=5$, $M=2$, $N=3$ and $L=16$. }
	\label{fig:Manifolds_T5M2N3L16_beta_delta}
\end{figure}


Using the same settings as in the previous case, Fig. \ref{fig:Compare_T5M2N3L16_beta_delta} shows how the trace manifold design (the best performing design from the manifold comparison), outperforms the single-user designs and the multiuser $J_{1/2}$ criterion studied in \cite{ngoyang}. In particular, we have used single-user packings optimized using the minimum chordal distance and the coherence criterion, as explained in \cite{AlvarezEusipco22} and \cite{Cuevas21WSA}. The figure depicts another coherence constellation created by optimizing a concatenated single-user constellation of double size, and then splitting it up into two, one half for each user. This design is labeled as {\it Coherence Split} in the figure. One would expect that this would help to optimize the cross terms in the union bound of the single-user packings, but the result shows that this type of splitting performs even slightly worse than the plain single-user coherence optimization. Finally, for this setup we are not able to distinguish the $J_{1/2}$ performance from the performance of single-user designs, since the former only outperforms slightly the latter at low SNR.

\begin{figure}[H]
    \centering
\includegraphics[width=.6\textwidth]{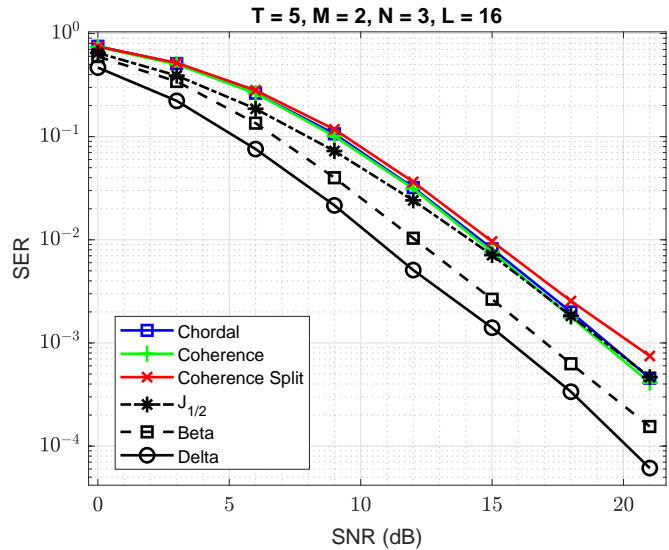}
     \caption{Multiuser codebook performance for jointly optimized designs ($J_{1/2},\beta,\delta$), for a 2-user MIMO MAC with $T=5$, $M=2$, $N=3$ and $L=16$, vs. single user designs based on the chordal and coherence criteria. }
	\label{fig:Compare_T5M2N3L16_beta_delta}
\end{figure}


In Fig. \ref{fig:Manifolds_T4M2N10L16_beta_delta} and Fig. \ref{fig:Compare_T4M2N10L16_beta_delta} the same type of analysis was carried out but for a different scenario, with $T=4$ symbol periods, $M=2$ transmit antennas, and a high number of receive antennas $N=10$. In Fig. \ref{fig:Manifolds_T4M2N10L16_beta_delta} the improvement in performance by using the oblique and trace manifolds is outstanding. Moreover, the optimization in the Grassmannian produces designs which are indistinguishable between both cost functions and even show a noise floor at high SNR. With this high number of receive antennas, the single-user designs also show a noise floor of similar magnitude as for the multiuser USTM designs. Even the multiuser cost function $J_{1/2}$ does not perform well, although the floor is lower than in the other cases. However, the performance of the $\delta_{UB}$ and $\beta_{UB}$ constellations is orders of magnitude better, without showing any noise floor, and it seems to attain a big portion of the full-diversity. Since these multiuser designs are a union bound of $\beta$ and $\delta$ functions, weighting the exponential on the number of receive antennas $N$, it seems natural that this parameter plays a role in the designs, as these figures confirm in comparison with the previous two.

\begin{figure}[H]
    \centering
\includegraphics[width=.6\textwidth]{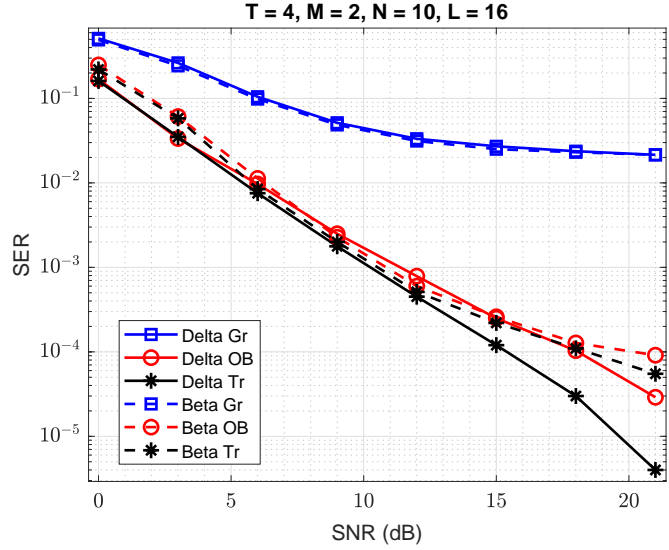}
     \caption{Performance of $\beta$ and $\delta$ optimization designs in different Riemannian manifolds, for a 2-user MIMO MAC with $T=4$, $M=2$, $N=10$ and $L=16$.}
	\label{fig:Manifolds_T4M2N10L16_beta_delta}
\end{figure}

\begin{figure}[H]
    \centering
\includegraphics[width=.6\textwidth]{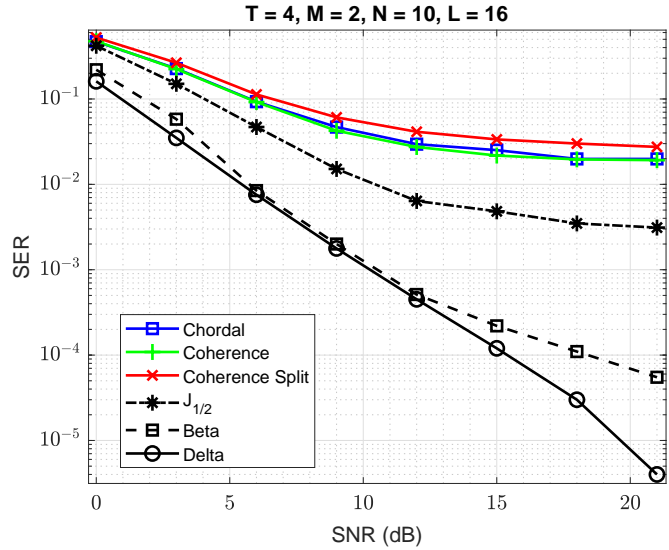}
     \caption{Performance of jointly optimized designs for a 2-user MIMO MAC with $T=4$, $M=2$, $N=10$ and $L=16$, vs. single-user designs.}
	\label{fig:Compare_T4M2N10L16_beta_delta}
\end{figure}


In Fig. \ref{fig:Manifolds_T6M3N3L16_beta_delta} and \ref{fig:Compare_T6M3N3L16_beta_delta} the setup changes to consider the same number of antennas at the transmitters and the receiver, $M=N=3$, with $T=6$ symbol periods. The same analysis of the previous scenario applies as well, with the single-user designs showing a noise floor at a SER value of around $10^{-1}$, whereas the delta function designs perform down to below $10^{-5}$ SER at $20$ dB. Nevertheless, we can see in Fig. \ref{fig:Manifolds_T6M3N3L16_beta_delta} that in this case the optimization on the oblique manifold performs very close to the trace manifold, suggesting that for this configuration the extra degrees of freedom by optimizing the average transmit power instead of using per-codeword power constraints does not significantly affect performance. The $J_{1/2}$ constellation seems to start developing a noise floor near $20$ dB, more than an order of magnitude below the single-user noise floor. But just like in the previous cases, the $\delta$ function designs consistently outperform all the other packings considered. 


\begin{figure}[H]
    \centering
\includegraphics[width=.6\textwidth]{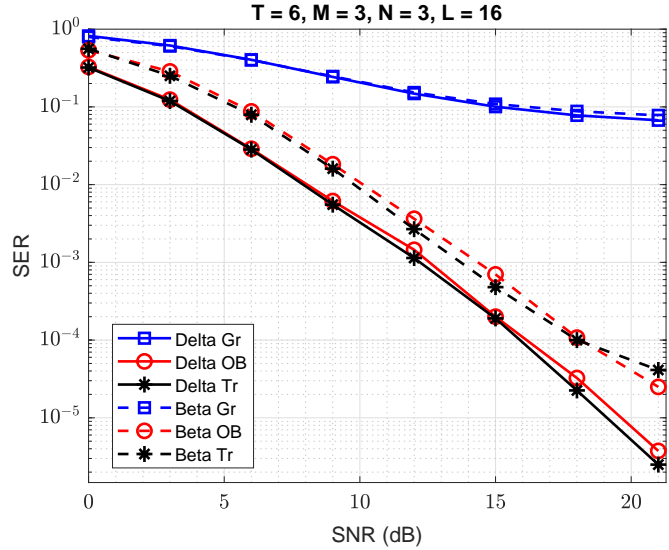}
     \caption{Performance of multi-user optimization designs in different Riemannian manifolds, for $T=6$, $M=3$, $N=3$ and $L=16$.}
	\label{fig:Manifolds_T6M3N3L16_beta_delta}
\end{figure}

\begin{figure}[H]
    \centering
\includegraphics[width=.6\textwidth]{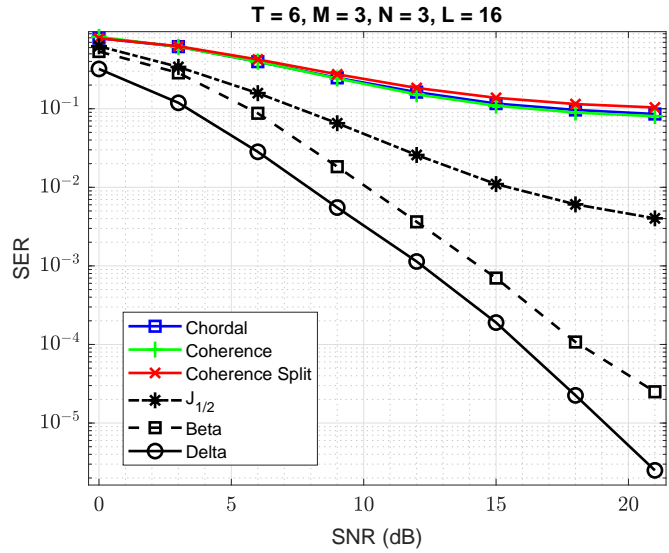}
     \caption{Performance of jointly optimized designs ($J_{1/2},\beta,\delta$) with $T=6$, $M=3$, $N=3$ and $L=16$, vs. single user designs.}
	\label{fig:Compare_T6M3N3L16_beta_delta}
\end{figure}

We can conclude that the impact of allowing less restrictive constraints on the codeword powers results in a better performance of the designed constellations when there is no full-diversity in the MAC. This is reasonable since the manifolds on which the codewords are represented, and moved during the optimization, have higher dimension the less constrained the power is, i.e. there is more space to approach possible (local) minima of the cost functions. However, this conclusion only holds for non-full diversity designs, as we shall see next in the full-diversity scenario.

\subsection{Noncoherent multiuser constellation designs in full-diversity scenarios}\label{subsec:resultsUB}

In this subsection we study the SER performance of the multiuser designs proposed in Sec. \ref{sec:ubpep} obtained by optimizing the union bound of the dominant term of the Brehler-Varanasi asymptotic PEP formula. We consider a 2-user MIMO MAC and work under the following assumptions: i) the two users have the same average SNR, and ii) there is no correlation between the channel fading coefficients.  Moreover, the formula of interest is only valid in the full-diversity case, meaning that only scenarios with $T\geq (K+1)M$ shall be analyzed here. Moreover, only the terms in the union bound corresponding to a single user in error are considered. There are two reasons for this simplification: first, the terms with only one user in error dominate the PEP expression; and second, this reduces the computational complexity dramatically, as explained in Sec. \ref{sec:ubpep}.

In Fig. \ref{fig:PEP_T3M1N3L16} the case of $T=3$ symbol periods, $M=1$ emitter antennas, $N=3$ receiver antennas, and $B=4$ bits per symbol is studied and compared versus the single-user designs. Two multiuser designs outperform these single-user constellations: the proposed union bound optimization of criterion (\ref{eq:MACUBfinal2}) and a min-max criterion (labeled MinMax-PEP), minimizing the worst PEP. One can understand that improving at every iteration the dominant term out of the possible pairwise probability errors ought to yield performance gains, which indeed is the case as shown by the dashed curve vs. the single user constellations. However, the union bound optimization clearly outperforms this by around $2.5$ dB at SER = $10^{-3}$. This is expected since a union bound method minimizes all terms of the possible error probabilities at the same time.

\begin{figure}
    \centering
\includegraphics[width=.6\textwidth]{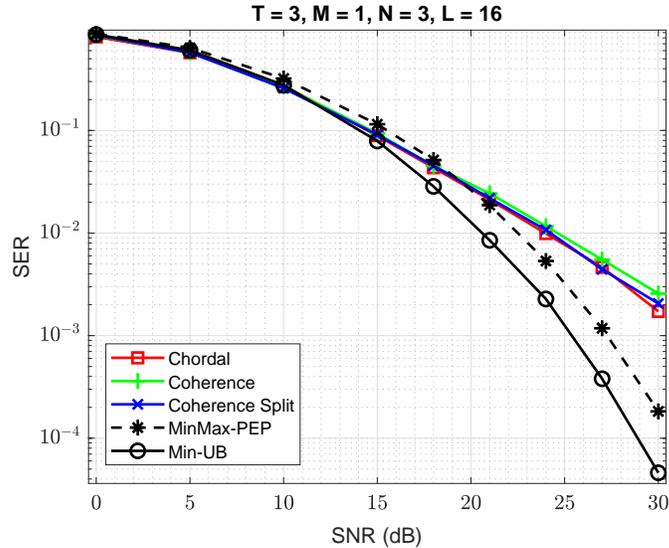}
     \caption{Performance of a jointly optimized design based on the Brehler-Varanasi asymptotic PEP union bound, with $T=3$, $M=1$, $N=3$ and $L=16$ vs. single user designs and compared to minimizing the dominant asymptotic PEP.}
	\label{fig:PEP_T3M1N3L16}
\end{figure}

A very similar configuration is shown in Fig. \ref{fig:PEP_T4M1N4L32}, where we consider $B=5$ bits per codeword, $N=4$ receive antennas, and $T=4$. In this case the gap between the min-max method and the union bound reduces, but the latter still provides the best results. It is interesting to note that the coherence criterion for single-user constellations outperforms the chordal distance criterion in the multiuser scenario, a behavior which was not so evident in the previous figure.

\begin{figure}[H]
    \centering
\includegraphics[width=.6\textwidth]{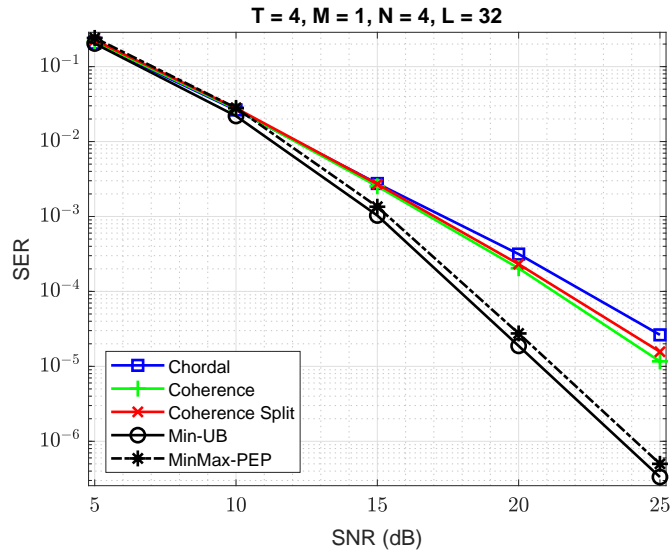}
     \caption{Performance of jointly optimized designs based on the PEP union bound, with $T=4$, $M=1$, $N=4$ and $L=32$, vs. single user designs.}
	\label{fig:PEP_T4M1N4L32}
\end{figure}

We  may conclude from the previous two figures that single-user codebooks do not only perform worse in term of SER at any given SNR, but also they do not achieve the same slope as the multiuser constellations. On the other hand, the multiuser codebooks designed with either the UB or a max-min approach attain the full-diversity of the system $MN$ for both users. In comparison to the min-max approach the UB criterion provides some coding gain, a shift to the left of the SER vs. SNR curve.


The impact of using different manifolds in the joint union bound criterion can be seen in Fig. \ref{fig:PEPman_T6M2N4}. Essentially there are not significant differences in performance. Moreover, for full-diversity scenarios the Grassmannian constellations seem to perform slightly better the higher the spectral efficiency is.

\begin{figure}[H]
    \centering
\includegraphics[width=.6\textwidth]{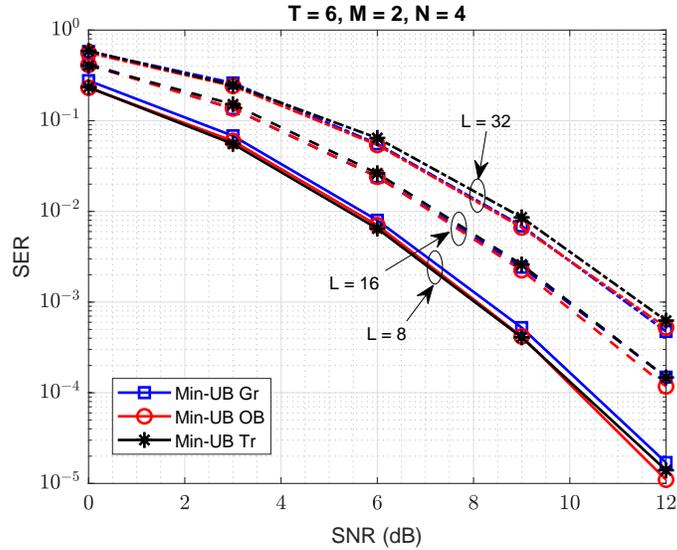}
     \caption{Comparison of the  multi-user constellation designs optimizing the Brehler-Varanasi union bound on different manifolds, for $T=6$, $M=2$, $N=4$ and $L \in \{8, 16, 32\}$.}
	\label{fig:PEPman_T6M2N4}
\end{figure}

It is important to point out that the performance of a given optimized constellation in the MAC depends on many parameters: number of users, number of antennas, coherence time, etc. In particular, the number of receive antennas affects the diversity (the slope of the SER vs. SNR curve) and hence can lead to important differences in the performance. This is illustrated in Fig. \ref{fig:PEP_T3M1N5}, which compares the performance of all unstructured joint constellation design methods proposed in this report. The full-diversity scenario is a 2-user MAC with $T=3$, $M=1$, $N=5$ and codebooks of cardinality $L=32$. It is remarkable that the behaviour of the union bound codebook is extremely good at high SNR, as expected from the theoretical result it rests upon, and much better than any other method, including the optimization of the min-max PEP. We can also compare in this plot the performance of the $\beta$ and $\delta$ designs, providing evidence that they are not good criteria for full-diversity scenarios. In fact, they do not even reach the single-user designs' performance. For full-diversity scenarios, the $\beta$ and $\delta$ designs appear to develop a noise floor at very high SNR, whereas the union bound criterion does not show any noise floor and attains the full diversity of the system $MN$. 

\begin{figure}[H]
    \centering
\includegraphics[width=.6\textwidth]{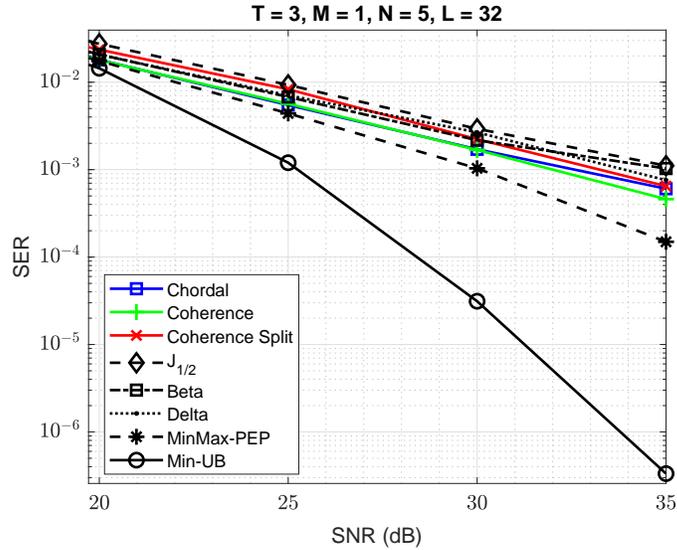}
     \caption{Comparison of all the different constellation designs based on the proposed multi-user cost functions, for $T = 6$, $M = 2$, $N = 5$ and $L = 32$, vs. the single user designs.}
	\label{fig:PEP_T3M1N5}
\end{figure}

Finally, in Fig. \ref{fig:PEP_T6M2two} we compare the multiuser design criteria in a full-diversity scenario with $T=6$, $M=2$ and different number of receive antennas and bit rates. It is worth mentioning that the gap in performance between the Brehler-Varanasi asymptotic PEP and the Ngo-Yang proxy functions seems to get reduced when increasing the number of bits per codeword. Still, the former outperforms all designs studied so far. Moreover, since $J_{1/2}$ is the best performing criteria out of five shown in \cite{ngoyang}, we can conclude that our designs provide state-of-the-art multiuser constellations for the MAC.


\begin{figure}[H]
    \centering
\includegraphics[width=.6\textwidth]{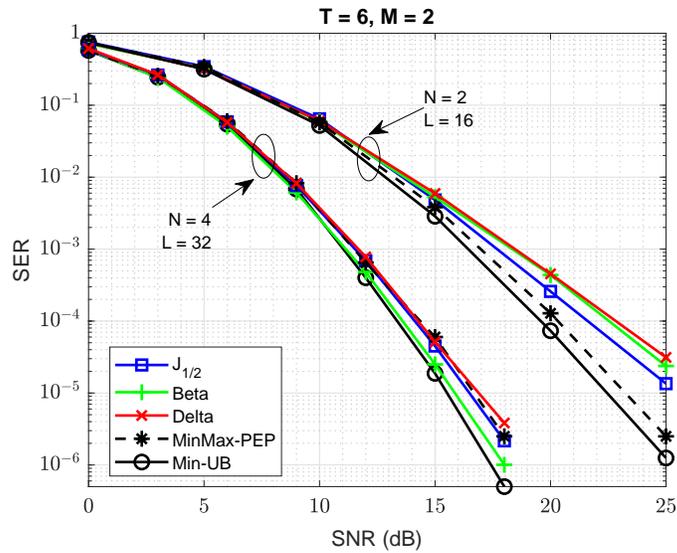}
     \caption{Comparison of the different multi-user constellation designs based on the proposed multi-user cost functions, for $T = 6$, $M = 2$, $N \in \{2,4 \}$ and $L \in \{16,32\}$.}
	\label{fig:PEP_T6M2two}
\end{figure}
\section{Conclusion} \label{sec:conclusions}

In this paper we have developed Riemmanian optimization techniques for designing noncoherent constellations for the MIMO MAC. In particular, we have developed optimized multiuser space-time codebooks for full-diversity ($T \geq (K+1)M$) and non-full diversity scenarios ($T < (K+1)M$). For full-diversity scenarios, the cost function is a union bound of the dominant terms (i.e, those terms corresponding to the case where only one of the users of the MAC channel is in error) of the asymptotic PEP. For non-full diversity scenarios the PEP expression is no longer valid and therefore we use union bounds of some recently proposed proxies pf the PEP, called the $\delta$ and $\beta$ functions, as design criteria. The proposed cost functions and the corresponding Riemannian optimization techniques are valid for any number of users. 

In addition to the traditional Grassmann manifold, which is optimal only in the single-user case, we consider the optimization of multiuser codebooks in other Riemannian manifolds corresponding to different power constraints on the codewords. We show that the manifold on which the optimization is performed can have a significant impact on performance, especially in non-full diversity scenarios. Our results suggest that in non-full diversity case the $\delta_{UB}$ cost function optimized on the trace manifold, corresponding to an average power constraint, is the best performing design. Whereas in the full-diversity case the best performing constellations in terms of symbol error rate (SER) are those designed using the dominant factor of the asymptotic joint PEP on the Grassmann manifold. Future lines of work include the development of an asymptotic PEP formula for non-full diversity scenarios that would avoid the use of proxies in this case and the study of noncoherent schemes for the broadcast channel.



\appendix
\subsection{Riemannian manifolds}\label{app:preliminaries}
The complex Grassmannian $\Gras$ is the set of $M$--dimensional complex subspaces of $\mathbb{C}^T$, with $T>M$, that is a complex manifold of dimension $M(T-M)$. Elements in $\Gras$ are represented by matrices in the Stiefel manifold $\A\in\St$, that is $\A\in\mathbb{C}^{T\times M}$, $\A^H\A=\I_M$. This representation is not unique, since $\A$ and $\A\U$ with $\U$ a unitary $M\times M$ matrix represent the same element in $\Gras$, so formally we should denote elements of the Grassmannian as $[\A]$ where $\A\in\St$ is a unitary basis for that subspace, ${\bf P}_{\bf A} = {\bf A}{\bf A}^H$ denotes the orthogonal projection onto $[{\bf A} ]$ and $[\A]$ is the class of $\A$ under the quotient by the set of $M\times M$ unitary matrices $\mathcal{U}_M$. Mathematically, this defines a Riemannian structure on the Grassmannian given by the {\em Riemannian submersion}
\[
\begin{matrix}
\pi:&\St&\to&\Gras=\St/\mathcal{U}_M\\
&\A&\mapsto&[\A].
\end{matrix}
\]

Sometimes we will consider the optimization of a real function $\varphi$ whose argument can be either a complex matrix in the ambient space $\X \in \mathbb{C}^{T \times M}$, a Stiefel matrix $\X \in \St$, or a point in the Grassmanian $[\X] \in \Gras$. We will denote the function generically as $\varphi(\X)$, meaning for the Grassmann that $\varphi(\X\U) = \varphi(\X)$ for any $M\times M$ unitary matrix $\U$, and employ the notation $D \varphi(\X)$ to denote the unconstrained derivative of the function in the ambient space, and $\nabla \varphi(\X)$ to denote the gradient of the function on the tangent space of the Grassmannian. In both cases it will be understood that the derivative or the gradient is evaluated at $\X$ or $[\X]$, respectively. In particular, these derivatives play two roles: on the one hand, the unconstrained derivative $D \varphi(\X)$, depending only on a point $\X$, is the Jacobian matrix of $\varphi$ with respect to the components $X_{mn}$ of $\X$, for $m=1,\dots,T,\; n=1,\dots, M$, i.e., as a matrix it has complex components given by
$$
D \varphi(\X)_{mn} = \frac{\partial\varphi}{\partial X_{mn}} = \frac{\partial\varphi}{\partial \Re(X_{mn})}+ i\,\frac{\partial\varphi}{\partial \mathfrak{I}( X_{mn})}.
$$
On the other hand, these derivatives, when depending both on a point $\A$ and a tangent vector $\dot\A$, are to be understood as directional derivatives in their respective tangent spaces, for example
$$
D \varphi(\A)(\dot\A) = \lim_{t\rightarrow 0}\frac{\varphi(\A + t\dot\A)-\varphi(\A)}{t}=\frac{d}{dt}\mid_{t=0}\varphi(\A + t\dot\A)
$$
With this interpretation we can define partial derivatives of $\varphi$ with respect to the real and imaginary part of every direction in the tangent space and thus arrive at the Jacobian matrix again. The relationship between both objects in the ambient space can be verified to be:
\begin{equation}\label{eq:gradient_def}
D \varphi(\A)(\dot\A) = \Re(\langle D\varphi(\A) , \dot \A\rangle_F),
\end{equation}
a property which will serve as requirement for the definition of gradient vector $\nabla\varphi$ on a general manifold (see Corollary \ref{cor:gradient}).

The tangent space to the Stiefel manifold at $\A\in\St$ is easy to describe from the defining equation $\A^H\A=\I_M$:
\begin{align*}
T_{\A}\St=&\left\{\dot \A\in\mathbb{C}^{T\times M}:\frac{d}{dt}\mid_{t=0}((\A+t\dot \A)^H(\A+t\dot \A))= 0\right\}\\
=&\{\dot \A\in\mathbb{C}^{T\times M}:\dot \A^H\A+\A^H\dot \A={\bf 0}\}.
\end{align*}
The Riemannian submersion $\pi$ allows us to identify the tangent space to the Grassmannian with the orthogonal to the kernel of $\pi$; in other words,
\begin{align*}
T_{[\A]}\Gras\equiv&\{\dot \A\in T_{\A}\St:\dot \A\perp \A\dot \U \text{ for all } \dot \U\in T_{\I_M}\mathcal{U}_M\}\\
=&\{\dot \A\in\mathbb{C}^{T\times M}:\dot \A^H\A+\A^H\dot \A={\bf 0}, \langle \dot \A,\A\dot \U\rangle_F=0\;\forall \, \dot \U:\dot \U+\dot \U^H={\bf 0}\}\\
=&\{(\I_T-\A\A^H)\dot \B:\dot \B\in \mathbb{C}^{T\times M}\}.
\end{align*}
For this note that both spaces have the same (complex) dimension $M(T-M)$ and that the latter is included in the former since for $\dot \B\in \mathbb{C}^{T\times M}$, taking $\dot \A=(\I_T-\A\A^H)\dot \B$, we have:
\begin{align*}
\dot \A^H\A+\A^H\dot \A=&\dot \B^H(\I_T-\A\A^H)\A+\A^H(\I_T-\A\A^H)\dot \B={\bf 0},\\
\langle \dot \A,\A\dot \U\rangle_F=&\langle (\I_T-\A\A^H)\dot \B,\A\dot \U\rangle_F=\langle \A^H(\I_T-\A\A^H)\dot \B,\dot \U\rangle_F=0.
\end{align*}
We thus obtain
\[
T_{[\A]}\Gras\equiv\{(\I_T-\A\A^H)\dot \B:\dot \B\in \mathbb{C}^{T\times M}\}.
\]
The last set obviously does not depend on the chosen representative for $[\A]$.

The following lemma is fundamental for the computation of gradients in the Grassmannian that will be used in the UB optimization algorithm.

\begin{corollary}\label{cor:gradient}
Let $\varphi:\mathbb{C}^{T\times M}\to\mathbb{R}$ be a $C^1$ mapping, defined at least in some open neighborhood of the Stiefel manifold $\St\subseteq\mathbb{C}^{T\times M}$, and assume that $\varphi$ can be defined as a function on $\Gras$, that is, we have:
$$
\varphi(\A)=\varphi(\A\U)\text{ for $\A\in\St$, $\U\in \mathcal U_M$}.
$$
Then, the gradient of $\varphi$ at $\A\in\St$ as a Grassmannian mapping is:
$$
\nabla\varphi(\A) = (\I_T-\A\A^H) D \varphi(\A),
$$
where $D \varphi$ is the unconstrained gradient of $\varphi$ as a function on the ambient space $\mathbb{C}^{T\times M}$.
\end{corollary}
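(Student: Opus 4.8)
The plan is to identify $\nabla\varphi(\A)$ through its defining variational property and then to check the proposed formula against it. By the requirement singled out in Eq.~(\ref{eq:gradient_def}), the Riemannian gradient on the Grassmannian is characterized as the unique tangent vector $g\in T_{[\A]}\Gras$ for which $\Re\langle g,\dot\A\rangle_F = D\varphi(\A)(\dot\A)$ holds for every $\dot\A\in T_{[\A]}\Gras$, where the right-hand side is a directional derivative that, again by Eq.~(\ref{eq:gradient_def}), equals $\Re\langle D\varphi(\A),\dot\A\rangle_F$. Thus everything reduces to showing that the candidate $(\I_T-\A\A^H)D\varphi(\A)$ is tangent to the Grassmannian and induces the same real Frobenius inner products against all tangent vectors as the ambient gradient $D\varphi(\A)$ does.

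First I would verify tangency. Writing $\P:=\I_T-\A\A^H$ for the orthogonal projector onto the complement of the column span of $\A$, and using the explicit description $T_{[\A]}\Gras=\{\P\dot\B:\dot\B\in\mathbb{C}^{T\times M}\}$ established in the appendix, the vector $\P D\varphi(\A)$ lies in this set via the choice $\dot\B=D\varphi(\A)$, so it is a legitimate candidate. Next I would record the two algebraic facts that do all the work: using $\A^H\A=\I_M$, the matrix $\P$ is Hermitian and idempotent, and every tangent vector is fixed by it, i.e. $\P\dot\A=\dot\A$ for $\dot\A\in T_{[\A]}\Gras$ (since such a $\dot\A$ is already of the form $\P\dot\B$). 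With these, for any tangent $\dot\A$,
$$\Re\langle \P D\varphi(\A),\,\dot\A\rangle_F=\Re\langle D\varphi(\A),\,\P\dot\A\rangle_F=\Re\langle D\varphi(\A),\,\dot\A\rangle_F=D\varphi(\A)(\dot\A),$$
where the first equality uses $\P^H=\P$, the second uses $\P\dot\A=\dot\A$, and the last is Eq.~(\ref{eq:gradient_def}). Hence $\P D\varphi(\A)$ is tangent and reproduces the directional derivative, so by uniqueness of the gradient it coincides with $\nabla\varphi(\A)$.

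The step demanding the most care is the interplay with the quotient structure: the Grassmannian is not embedded in $\mathbb{C}^{T\times M}$ but realized as $\St/\mathcal{U}_M$ through the submersion $\pi$, and one must be sure that $\P$ really is the orthogonal projector onto the \emph{horizontal} space $T_{[\A]}\Gras$, and that it is the full ambient gradient that gets projected rather than some symmetrized Stiefel correction. This is exactly where the invariance hypothesis $\varphi(\A)=\varphi(\A\U)$ enters. It guarantees that $\varphi$ descends to a well-defined function on $\Gras$, so that the notion of a Grassmannian gradient is meaningful at all, and it forces the directional derivative of $\varphi$ along any vertical direction $\A\dot\U$ to vanish; consequently the restricted gradient on $\St$ has no vertical component and the single projection by $\P$ suffices. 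I would also note, for robustness, that $\P$ annihilates the vertical directions $\A\dot\U$ (they lie in the column span of $\A$) and that replacing $\A$ by $\A\U$ leaves $\P=\I_T-\A\A^H$ unchanged, which makes the formula independent of the chosen Stiefel representative, as it must be for a genuine object on $\Gras$.
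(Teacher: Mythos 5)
Your proof is correct and follows essentially the same route as the paper's: verify that $(\I_T-\A\A^H)D\varphi(\A)$ lies in $T_{[\A]}\Gras$ via the explicit description $\{(\I_T-\A\A^H)\dot\B\}$, then confirm the defining variational property by moving the Hermitian idempotent projector across the real Frobenius inner product, and conclude by uniqueness of the gradient. Your additional remarks on the quotient structure, the vanishing of vertical derivatives under the invariance hypothesis, and representative-independence are sound but supplementary; the paper's argument proceeds without making them explicit.
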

\begin{proof}
  By definition, the gradient $ \nabla \varphi(\A)$ is the unique element of $T_{[\A]}\Gras$ such that for all $\dot \A\in T_{[\A]}\Gras$:
  \begin{align*}
  D\varphi(\A)(\dot \A)=\Re\langle \nabla \varphi(\A),\dot \A\rangle.
  \end{align*}
  Let $\dot \A=(\I_T-\A\A^H)\dot \B\in T_{[\A]}\Gras$ and note that
  \begin{align*}
&  \Re\langle (\I_T-\A\A^H) D \varphi(\A),\dot \A\rangle=\Re\langle (\I_T-\A\A^H) D \varphi(\A),(\I_T-\A\A^H)\dot \B\rangle\\
 & =\Re\langle D \varphi(\A),(\I_T-\A\A^H)\dot \B\rangle
  =\Re\langle D \varphi(\A),\dot \A\rangle
  = D\varphi(\A) (\dot \A),
  \end{align*}
  and since $(\I_T-\A\A^H) D \varphi(\A)$ is an element of $T_{[\A]}\Gras$, it satisfies the definition of gradient.
\end{proof}

\section*{Acknowledgments}
This work was supported by Huawei Technologies, Sweden under the project GRASSCOM. The work of D. Cuevas was also partly supported under grant FPU20/03563 funded by Ministerio de Universidades (MIU), Spain. The work of Carlos Beltr{\'a}n was also partly supported under grant PID2020-113887GB-I00 funded by MCIN/ AEI /10.13039/501100011033. The work of I. Santamaria was also partly supported under grant PID2019-104958RB-C43 (ADELE) funded by MCIN/ AEI /10.13039/501100011033.

\bibliographystyle{ieeetr}
\bibliography{IEEE_IT}

\vfill

\end{document}